\spnewtheorem{observation}[theorem]{Observation}{\bfseries}{\itshape}
\DeclareMathOperator*{\argmin}{arg\,min}
\newcommand{\EE}{\mathbb{E}}
\newcommand{\poly}{\textnormal{poly}}
\newcommand{\maxMarginal}[1][n]{#1}
\newcommand{\maxValue}[1][n]{\binom{#1}{2}}
\newcommand{\marginal}[2]{\deg_{#2}(#1)}
\newcommand{\pseudoMarginal}[4][]{\frac{\marginal{#2}{#3} \cdot \left(\marginal{#2}{#3} + 1 \right)}{#4_{#2#1}}}
\newcommand{\inversePseudoMarginal}[4][]{\frac{#4_{#2#1}}{\marginal{#2}{#3} \cdot \left( \marginal{#2}{#3} + 1  \right)}}
\newcommand{\estMarginal}[1]{\widetilde{d_{#1}}}
\newcommand{\constPseudoCore}{\ensuremath
\mathchoice
    {\frac{\varepsilon^4}{36}}
    {\varepsilon^4 / 36}
    {\varepsilon^4 / 36}
    {\varepsilon^4 / 36}
}
\newcommand{\constPseudoCoreInternal}{\ensuremath
\mathchoice
    {\frac{\varepsilon}{6}}
    {\varepsilon / 6}
    {\varepsilon / 6}
    {\varepsilon / 6}
}
\newcommand{\constInverseDegreeBound}{\ensuremath
\mathchoice
    {\frac{36 e}{\left(1-\varepsilon\right)^2 \cdot \varepsilon^5 \cdot \ln(2)}}
    {(36 e) / (\left(1-\varepsilon\right)^2 \cdot \varepsilon^5 \cdot \ln(2))}
    {(36 e) / (\left(1-\varepsilon\right)^2 \cdot \varepsilon^5 \cdot \ln(2))}
    {(36 e) / (\left(1-\varepsilon\right)^2 \cdot \varepsilon^5 \cdot \ln(2))}
}
\newcommand{\constConcentration}{\ensuremath
\mathchoice
    {\frac{2}{\varepsilon^2}}
    {2 / \varepsilon^2}
    {2 / \varepsilon^2}
    {2 / \varepsilon^2}
}
\newcommand{\funcPseudoCore}{\ensuremath \gamma
\left( n^2 + \mathchoice
{\frac{6}{\varepsilon}}
{(6/\varepsilon)}
{(6/\varepsilon)}
{(6/\varepsilon)}
n\right)}
\newcommand{\funcPseudoCoreInternal}{\ensuremath \gamma' n}
\newcommand{\funcInverseDegreeBound}[1][]{\ensuremath \delta_{#1} \cdot \mathchoice
    {\frac{\ln\ln(n)}{n}}
    {\ln\ln(n) / n}
    {\ln\ln(n) / n}
    {\ln\ln(n) / n}
}
\newcommand{\funcConcentration}{\ensuremath \kappa \cdot n^{1/2} \cdot \ln\left(2n^2\right)}
\newcommand{\funcConcentrationPlus}{\ensuremath n^{1/2} \left( \kappa \cdot \ln\left(2n^2\right) + 1 \right)}
\title{Linear Contracts for Supermodular Functions Based on Graphs}
\author{Kanstantsin Pashkovich\and Jacob Skitsko}%
 \institute{Deptartment of Combinatorics and Optimization \\ University of Waterloo, Canada\\\email{kpashkovich@uwaterloo.ca} \quad\email{jskitsko@uwaterloo.ca}
 }
\begin{document}
\date{}
\maketitle

We study linear contracts for combinatorial problems in multi-agent settings. In this problem, a principal designs a linear contract with several agents, each of whom can decide to take a costly action or not. The principal observes only the outcome of the agents' collective actions, not the actions themselves, and obtains a reward from this outcome. Agents that take an action incur a cost, and so naturally agents require a fraction of the principal's reward as an incentive for taking their action. The principal needs to decide what fraction of their reward to give to each agent so that the principal's expected utility is maximized.

Our focus is on the case when the agents are vertices in a graph and the principal's reward corresponds to the number of edges between agents who take their costly action. This case represents the natural scenario when an action of each agent complements actions of other agents though collaborations. Recently, \cite{deo2024supermodular} showed that for this problem it is impossible to provide any finite multiplicative approximation or additive FPTAS unless $\mathcal{P} = \mathcal{NP}$. On a positive note, the authors provided an additive PTAS for the case when all agents have the same cost. They asked whether an additive PTAS can be obtained for the general case, i.e when agents potentially have different costs. We answer this open question in positive.

\section{Introduction}

 In 2016, the Nobel Prize in economics was awarded to Oliver Hart and Bengt Holmstr\"om for their work on contract theory~\cite{Grossman1992,2ffe85e7-2b99-303d-a244-cf884ddb8386}. Today contracts play one of central roles in economic theory and in our everyday life.  "Modern economies are held together by innumerable contracts", {\it The Royal Swedish Academy of Sciences.}~\cite{nobel-contract}.

In recent years, contract theory has garnered an increasing interest in the theoretical computer science and optimization communities. One factor for this growing interest is the clear connection between contracts and fundamental optimization problems. The arising contract design problems turn out to be hard even in relatively structured settings, e.g. it is $\mathcal{NP}$-hard to find the optimal contract for additive reward functions in a multi-agent setting \cite{dutting2023multi}. So, it is natural to ask whether it is possible to efficiently find a contract that achieves \emph{approximately} optimal expected utility. By today, there are various results about approximations and hardness of approximation in the contract design theory

Recently, \cite{deo2024supermodular} studied multi-agent contract design when the reward function is supermodular and is based on a graph, with all agents having the same cost.  This class of functions represents the situation when the agents complement the actions of one another through collaboration. In other words the given graph represents potential collaborations between agents, and each active agent increases the total reward by the number of collaborations between this agent and other active agents. 

They showed that it is impossible to provide a finite multiplicative approximation or any additive FPTAS, unless $\mathcal{P} = \mathcal{NP}$. On a positive note, they develop an additive PTAS when all agents have the same costs. Their PTAS built on the techniques developed in~\cite{arora1995polynomial}, and established connections to densest subgraph problems. In~\cite{deo2024supermodular}, it was left as an open question whether it is possible to  provide a PTAS not only in the case when all agents have the same cost, but also for the general case. In this paper, we provide an additive PTAS for the graph based supermodular functions considered by~\cite{deo2024supermodular} for the case with general agent costs.

Apart from resolving the open question stated in~\cite{deo2024supermodular} and obtaining new tools specific to this question, we see our contribution also in developing techniques to round solutions for a certain class of linear programs. When coupled with randomized rounding, our techniques ensure that the value of the objective function and feasibility of constraints deteriorate only slightly, see Theorem~\ref{thm:main_supermodular2} below. This class of linear programs captures linear programs similar in spirit to cover problems where both the "ground set elements" and "covering sets" are identified with the same set. In particular, both are identified with agents. In this case, each linear program requires that each agent that participates in a "cover" is also "covered" a sufficiently high number of times, see linear programs formed by the constraints~\eqref{lp_intro:high_degree},~\eqref{lp_intro:low_degree},~\eqref{lp_intro:bounds}.

\subsection{Related Work}

Combinatorial functions naturally arise in contract design theory whenever there are several agents. Consider the setting where each agent decides to exert an effort or not, then the expected reward can be seen as a function defined on subsets of agents. Our work is most related to the work of \cite{dutting2023multi,deo2024supermodular}. They both consider the same multi-agent setting that we do, with related combinatorial functions. \cite{dutting2023multi} considers general XOS and submodular functions, while \cite{deo2024supermodular} considers a particular class of  supermodular function based on graphs. They provide approximation algorithms and demonstrate hardness of approximation, which justifies the study of approximately optimal contracts. \cite{dutting2023multi} gives a constant approximation for XOS and submodular functions, and shows that there is also a constant such that it is $\mathcal{NP}$-hard to approximate the optimal expected utility within that constant. \cite{deo2024supermodular} proved that for supermodular functions based on graphs the problem admits no multiplicative approximation or additive FPTAS unless $\mathcal{P} = \mathcal{NP}$. On a positive note, \cite{deo2024supermodular} showed how to design an additive PTAS for the case when agents' costs are identical. Earlier, \cite{babaioff2006combinatorial} considered combinatorial functions in the multi-agent setting and gave an algorithm to compute the optimal contract for AND networks. It was then observed by \cite{emek2012computing} that computing the optimal contract for OR networks was $\mathcal{NP}$-hard. Note, the functions corresponding to OR networks are submodular. Also, several works show that there are natural limitations for such approximations and these limitations are related to both the computational hardness  \cite{dutting2019simple,dutting2021complexity,castiglioni2021bayesian,castiglioni2022designing} and to the query complexity~\cite{ezra2024approximability,dutting2024query}.

A setting with reward function based on edges in a graph was also considered in \cite{dasaratha2023equity}.
They focus on characterizing the optimal allocation in their setting.
There is an alternative multi-agent contract model, studied in \cite{castiglioni2023multi}. In this model each agent determines their own individual outcome, which removes externalities. Under mild regularity conditions, they show the optimal contract in their setting for supermodular functions can be computed in polynomial time, and for submodular functions can be approximated within~$(1 - 1/e)$. There are related models with a single agent but multiple actions, where the agent can take a subset of actions. It is also natural to consider combinatorial functions in such a setting. \cite{dutting2022combinatorial}  calculated the optimal contract for gross-substitutes functions, and show the hardness for submodular functions. The techniques used were analyzed more closely in \cite{dutting2024combinatorial}. One could also consider a setting where the agent is able to take actions sequentially, after observing the outcomes of previous actions \cite{ezra2024sequential,hoefer2024contract}. This has a natural connection to Pandora's box problems \cite{weitzman1978optimal}. In concurrent work, \cite{duetting2024multi} considered a setting with multiple agents each able to choose a subset of actions. For submodular reward functions in this setting, \cite{duetting2024multi} obtained a constant approximation with value and demand oracle access and showed that no PTAS exists. Finally, one can consider settings with multiple principals \cite{alon2024incomplete}.

An expansive number of different settings for contract theory have been explored in recent years, as the setting has recently caught the eye of many researchers in the theoretical computer science community. As a small list, other recent models have included contracts with inspections \cite{ezra2024contracts}, ambiguous contracts \cite{dutting2023ambiguous}, agents choosing from menus of contracts \cite{guruganesh2034contracts,bernasconi2024agent}, contracts with a learning agent \cite{guruganesh2024contracting}, contracts for delegating machine learning tasks \cite{saig2024delegated,ananthakrishnan2024delegating} or selling data to a machine learner \cite{chen2022selling}, hidden but signaled outcomes \cite{babichenko2022information}, and Bayesian agent types \cite{alon2021contracts,alon2022bayesian,xiao2020optimal,guruganesh2021contracts,guruganesh2034contracts,castiglioni2022designing,castiglioni2024reduction}.

Our results on the supermodular functions also have close ties to approximations for dense problems. \cite{arora1995polynomial} gave an additive PTAS for many instances of dense problems, which uses exhaustive sampling to approximately solve a polynomial program. Similar sampling techniques were used in \cite{daskalakis2011oblivious,barman2015approximating} to give a PTAS for finding small probability mixed Nash Equilibrium. 

\subsection{The Contract Design Setting}

We have a single principal who interacts with a set of $n$ agents ${[n]} = \{1, \dots, n\}$. We focus on the setting with binary actions where each agent either exerts an effort or not. An agent $i$ exerting an effort incurs a cost $c_i \in \mathbb R_{\geq 0}$ for the agent, and otherwise they incur no cost. 

We are given a function $f: 2^{{[n]}} \rightarrow \mathbb R_{\geq 0}$ representing the expected reward of the principal when a group of agents decides to exert an effort. 
The function $f$ is monotone, i.e. $f(S) \leq f(T)$ holds whenever $S \subseteq T$. The function $f$ is also normalized so that $f(\varnothing) = 0$ and $f([n]) \leq 1$. We define the \emph{marginal value} of agent $i$ with respect to the set $S$ as $f(i \mid S) := f(S \cup \{i\}) - f(S)$.

A key challenge in contract theory is that the principal obtains an outcome and the corresponding reward, but cannot observe the actions of the agents. Thus,  the principal has to provide incentives for agents to take costly actions, and these incentives have to be based solely on the outcome. 

To achieve this, the principal has to come up with a contract. We restrict focus to \emph{linear contracts}, which are popular in the literature. A linear contract is defined by  a single  vector $t=(t_1,\ldots, t_n)^T$ and the agent $i$, $i\in [n]$ always receives a positive transfer $r \cdot t_i$ whenever the principal obtains the reward $r$.

Let us assume that the agents in $S$ exert an effort while all agents not in $S$ do not. Then, for a linear contract $t=(t_1,\ldots, t_n)^T$  the \emph{expected utility} of an agent $i$, $i\in[n]$ equals $f(S) \cdot t_i - c_i$ if $i$ is in $S$; and equals $f(S) \cdot t_i$ if $i$ is not in $S$. Moreover, the principal's expected utility is
\[
    f(S)\cdot \left( 1 - \sum_{i \in {[n]}} t_i \right)\,.
\]
Observe that the actions of agents can affect the expected utility of other agents. We assume that each agent attempts to maximize their expected utility. Following existing contract theory literature \cite{babaioff2006combinatorial,dutting2023multi,deo2024supermodular}, we see that the set $S$, $S\subseteq [n]$ forms a pure Nash equilibrium  if and only if
$f(S)\cdot t_i - c_i
    \geq f(S\setminus i)\cdot t_i
    $ for all  $i\in S$ and 
    $f(S)\cdot t_i
    \geq f(S \cup i)\cdot t_i - c_i$
    for all  $i\notin S$.
Then a linear contract $t^*=(t^*_1,\ldots, t^*_n)^T$ that incentivizes exactly the agents in $S$ to exert an effort while maximizing the principal's expected utility satisfies
$t^*_i = c_i/f(i\mid S)$ for all  $i\in S$ and $t^*_i = 0$ for all $ i \notin S$. Here, $c_i/f(i\mid S)$ is $0$ if $c_i=0$, but otherwise infinity when $f(i\mid S) = 0$. Using the contract $t^*$ the principal's expected utility is given by the function
 \begin{equation}\label{eq:function_g}
      g(S) := \left( 1 - \sum_{i\in S} \frac{c_i}{f(i\mid S)} \right) f(S) \,.
 \end{equation}
  Hence, whenever the principal decides to incentivize  the set of agents $S$, the maximum principal's expected utility  equals $g(S)$. Now, our goal is to determine  $S$, $S \subseteq [n]$ that maximizes $g(S)$.

We are given a graph $G=(V,E)$ with $|V|=n$, and the agents are vertices in $V$. We consider the reward function $f(S) = |E(S)| / \binom{n}{2}$. For the sake of exposition, we modify the costs $c_i$, $i\in V$ as follows $c_i \gets c_i \cdot \maxValue$. Our goal is thus to find a set $S\subseteq V$ that maximizes the following function, which corresponds to the function in~\eqref{eq:function_g}
\[
    g(S) =   \left(1 - \sum_{i \in S} \frac{c_i}{\marginal{i}{S}}\right) \frac{|E(S)|}{\maxValue} \,,
\]
where $E(S):=\{uv\in E\,:\, u \in S,\, v\in S\}$ and $\deg_S(i):=|N(i)\cap S|$. We further define functions for the left and right factors of $g(S)$
\[
    L(S) := 1 - \sum_{i \in S} \frac{c_i}{\marginal{i}{S}} \quad \text{and} \quad
    R(S) := \frac{|E(S)|}{\maxValue} \,.
\]

For linear contracts in the multi-agent setting where the reward function is a supermodular function based on graphs,  \cite{deo2024supermodular} showed that it is hard to obtain any multiplicative approximation unless $\mathcal{P} = \mathcal{NP}$. \cite{deo2024supermodular} left as an open question whether there is an additive PTAS. In this paper we answer this question positively.

\begin{theorem}\label{thm:intro_supermodular}
    Let us consider linear contracts in the multi-agent setting, where the reward function is a supermodular function based on graphs. There exists an additive PTAS in the case of general costs.
\end{theorem}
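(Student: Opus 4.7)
The plan is to obtain an additive $\varepsilon$-PTAS by combining exhaustive guessing of a small ``skeleton'' of the optimum with an LP relaxation and randomized rounding for the bulk of the agents. First, I would observe that it is enough to handle the case $g(\mathrm{OPT}) \geq \varepsilon$, since otherwise the empty set is already an additive $\varepsilon$-approximation. Under this assumption, both $L(\mathrm{OPT}) \geq \varepsilon$ and $R(\mathrm{OPT}) \geq \varepsilon$ hold, so $\mathrm{OPT}$ contains $\Omega(\varepsilon n^2)$ edges and no agent $i \in \mathrm{OPT}$ can have $c_i/\deg_{\mathrm{OPT}}(i)$ larger than $1-\varepsilon$. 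In particular $|\mathrm{OPT}| = \Omega(\sqrt{\varepsilon}\, n)$.

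Next, I would split the agents by a threshold of order $n/\ln\ln n$ applied to $\deg_{\mathrm{OPT}}(i)$. Call an agent \emph{high-degree} if $\deg_{\mathrm{OPT}}(i) \geq \delta^{-1}\, n/\ln\ln n$ and \emph{low-degree} otherwise. Inside $\mathrm{OPT}$, agents that are simultaneously high-cost (say $c_i \geq \varepsilon/6$) and low-degree cannot exist in large numbers because their contribution $c_i/\deg_{\mathrm{OPT}}(i)$ would exceed $1-\varepsilon$; likewise the number of ``heavy'' high-cost agents that must be handled exactly is bounded by $O(1/\varepsilon)$. My algorithm would therefore guess, by exhaustive enumeration of polynomially many possibilities: (a) the small set of heavy agents that are in $\mathrm{OPT}$; (b) coarse estimates $\widetilde{d_i}$ of $\deg_{\mathrm{OPT}}(i)$ for every vertex, obtained in the spirit of Arora--Karger--Karpinski by sampling a constant-size subset and reading off approximate degrees to a small additive error $\gamma n$; and (c) a sufficiently fine estimate of $\sum_{i \in \mathrm{OPT}} c_i/\deg_{\mathrm{OPT}}(i)$.

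For each consistent guess, I would solve an LP over variables $x_i \in [0,1]$ together with linearized edge indicators $y_{ij} \leq x_i,\, y_{ij} \leq x_j$. The objective is $\sum_{ij \in E} y_{ij}/\binom{n}{2}$, rescaled by the guessed value of $L$. The cost constraint becomes linear upon substituting the estimated degrees: $\sum_i c_i x_i / \widetilde{d_i} \leq 1 - \varepsilon - \varepsilon/6$, with the slack $\varepsilon/6$ reserved for the rounding loss. Additional constraints enforce $\sum_{j \in N(i)} x_j \geq \widetilde{d_i}(1 - \varepsilon/6)$ for high-degree agents and $\sum_{j \in N(i)} x_j \leq \widetilde{d_i}(1 + \varepsilon/6)$ for low-degree agents, matching the pattern of \eqref{lp_intro:high_degree}, \eqref{lp_intro:low_degree}, \eqref{lp_intro:bounds}. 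For the correct guess, the indicator of $\mathrm{OPT}$ is feasible, so the LP optimum is at least $g(\mathrm{OPT})$ up to the guessing slack. Given the LP optimum $x^*$, round each $x_i^*$ independently to $1$ with probability $x_i^*$ to obtain a random set $S$.

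The main obstacle, and the heart of the argument, is that the objective $g$ is non-linear in $S$: the denominator $\deg_S(i)$ in $L(S)$ depends on the realized random set, so a modest downward deviation of a single low degree can inflate $c_i/\deg_S(i)$ and destroy $L$. I would address this by proving a concentration statement of the form $|\deg_S(i) - \sum_{j \in N(i)} x_j^*| \leq \kappa\sqrt{n}\,\ln(2n^2)$ simultaneously for all $i$ via Chernoff plus a union bound, and pairing it with an analogous tail bound for $|E(S)|$. Because every agent retained in $S$ is guaranteed the degree floor $\delta^{-1} n/\ln\ln n$ up to a lower-order $\sqrt{n}\,\ln(2n^2)$ deviation, each rounded ratio $c_i/\deg_S(i)$ exceeds its LP surrogate $c_i/\widetilde{d_i}$ by at most an $o(1)$ multiplicative factor, which the slack $\varepsilon/6$ absorbs. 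Combined with the $\sqrt{\varepsilon n^2}$-size concentration on $|E(S)|$, the rounded $S$ satisfies $L(S) \geq L(\mathrm{OPT}) - O(\varepsilon)$ and $R(S) \geq R(\mathrm{OPT}) - O(\varepsilon)$ with high probability, yielding $g(S) \geq g(\mathrm{OPT}) - O(\varepsilon)$. Returning the best $S$ over all guesses and rescaling $\varepsilon$ gives the additive PTAS.
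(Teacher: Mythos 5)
Your high-level plan---guess a skeleton by sampling, solve an LP relaxation, and round---matches the paper's overall architecture, and you correctly identify that the denominator $\deg_S(i)$ in $L(S)$ is the crux. However, there are three genuine gaps where the sketch glosses over the difficult parts.

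First, you assert that you can focus on a near-optimal set in which every expensive agent has degree $\Omega(n/\ln\ln n)$. This is exactly the paper's Lemma~\ref{lem:structure_core}, and it does \emph{not} follow from the one-step observation you give. Your argument says that expensive low-degree agents cannot be numerous because of their contribution to $L$, but an agent with $c_i \geq \varepsilon/6$ and $\deg_{\mathrm{OPT}}(i) = \Theta(n/\ln\ln n)$ contributes only $\Theta(\ln\ln n / n)$ to the cost sum, so $\Theta(n/\ln\ln n)$ of them can coexist. The real difficulty is that \emph{removing} such agents decreases neighbours' degrees and may newly violate the degree floor elsewhere, and controlling this cascade while keeping $L$ and $R$ essentially intact is what requires the paper's iterated pseudo-coring (Algorithm~\ref{alg:iteratedCoring}). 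A one-shot deletion or "core" argument only delivers $\deg_{S'}(i) = \Omega(\sqrt{n})$ for expensive agents, which is too weak for the later rounding.

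Second, your concentration step does not go through as stated. You propose to round $x^*$ directly and prove $|\deg_S(i) - \sum_{j\in N(i)} x_j^*| \leq \kappa\sqrt{n}\ln(2n^2)$ uniformly via Chernoff and a union bound. That additive error is fine when $\sum_j x_j^*$ is $\Omega(n)$, but for agents $i$ whose fractional degree is only, say, $\Theta(\sqrt{n}\ln n)$ or smaller, it is of the same order as the mean: the realized $\deg_S(i)$ can be arbitrarily small (even zero) relative to $\sum_j x_j^*$, so $c_i/\deg_S(i)$ is not controlled by the LP surrogate. An optimal LP solution can place small positive mass on exactly such agents. The paper's fix is a separate preprocessing step (Algorithm~\ref{alg:fractionalCoring}) that zeroes out coordinates until every surviving $v\in B$ has $\sum_{u\in N(v)} x_u^* \geq \kappa n^{1/2}\ln(2n^2)$, proving that doing so costs only $\widetilde O(\sqrt{n})$ in fractional mass via a carefully set up recursion. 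Without that preprocessing, your union bound does not prevent a single bad vertex from wrecking $L(S)$.

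Third, you propose obtaining sampling-based estimates $\widetilde{d_i}$ of $\deg_{\mathrm{OPT}}(i)$ for \emph{every} vertex. Oblivious sampling only yields multiplicatively accurate degree estimates when $\deg(i) = \Omega(n)$; it cannot distinguish degrees of order $n/\ln\ln n$ from much smaller ones in polynomial time. The paper therefore never estimates degrees for the mid-range agents in $B$; instead it uses the structural lower bound $d_v$ guaranteed by pseudo-coring (and, dually, omits these agents from the LP objective and absorbs their cost contribution via Markov's inequality after rounding). Your LP as written relies on estimated degrees for exactly the vertices where estimation is infeasible.
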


Before providing an overview and intuition behind our PTAS, let us introduce some notation. We denote by $OPT$ the maximum value of $g(S)$ over $S$, $S\subseteq V$, and we let $S^*$ be the corresponding optimal set of agents.
We assume that $OPT$ is at least $\varepsilon$, since otherwise the task of additive approximation up to $\varepsilon$ is trivial, by taking~$\varnothing$.

Let us define parameters $\kappa$, $\sigma$, $\delta$, $\delta_\varepsilon$, $\gamma'$, $\gamma$ that do not depend on $n$. For now, it suffices to ignore their formal definitions and then return to them for verifying technical details and proofs. Here, $\kappa$ is needed for concentration inequalities, $\sigma$ is used for sampling bounds, $\delta$ and $\delta_\varepsilon$ are used for degree estimates, and $\gamma', \gamma$ are parameters for working with "pseudo-cores".
\begin{gather*}
    \kappa := \constConcentration, \quad
    \sigma:=\varepsilon^6, \quad
    \delta := \constInverseDegreeBound, \quad 
    \delta_\varepsilon := \frac{\delta}{\varepsilon}, \quad
    \gamma' := \constPseudoCoreInternal, \quad
    \gamma := \constPseudoCore \,.
\end{gather*}

\section{Overview and Intuition}

In this section, we provide overview and intuition behind the PTAS developed to prove Theorem~\ref{thm:intro_supermodular}. We would like to emphasize that the exposition in this section is done with the omission of technical details. The sole purpose of this section is to provide intuition for the objects and procedures introduced in later sections. No part of this section should be understood as a formal definition.

Let $S^*$ be an optimal set, i.e.\ the set $S^*$, $S^*\subseteq V$ that achieves $\max_{S\subseteq V} g(S)$. Our approach roughly attempts to obtain enough information about $S^*$ so that we can set up and round an appropriate linear program. More truthfully, we show the existence of a structured near optimal set $S'$. We cannot efficiently find $S'$, but we can obtain enough of information about $S'$ to set up our LP. We then round the LP optimal solution to a near optimal solution $S''$ for our original problem.

\textbf{Existence of a Structured Near Optimal Solution.}
As a first step we generalize the approach built on cores from~\cite{deo2024supermodular}. A core of $S^*$ no longer always suffices as a structured near optimal solution. We consider a \emph{pseudo-core} of $S^*$ instead. In particular, we show that there exists a set $S'$, $S'\subseteq S^*$ such that:
\begin{enumerate}[(i)]
    \item the value $g(S')$ of $S'$ is not far from the optimal value $g(S^*)=OPT$, in particular $g(S')\geq g(S^*)-\varepsilon$;
    \item the degree $\deg_{S'}(v)$ of each vertex $v$ in $S'$ is sufficiently high in comparison to its cost $c_v$, in particular for each vertex $v\in S'$ we have
    \begin{equation}\label{eq:overview_degree_bound}
            \frac{c_v}{\deg_{S'}(v)(\deg_{S'}(v)+1)}\leq \frac{1}{\funcPseudoCore}\,.
    \end{equation}
\end{enumerate}
We remark that~\eqref{eq:overview_degree_bound} implies that  $(\deg_{S'}(v))^2 / c_v = c_v\cdot \Omega(n^2)$ holds for each $v \in S'$ (up to a factor depending on $\varepsilon$).

\textbf{Obtain Information About a Structured Near Optimal Set.}
After showing the existence of a structured near optimal solution $S'$, we focus on obtaining another near optimal solution computationally efficiently. We do not know $S'$ and do not intend on finding it. Nonetheless, we can obtain useful information about $S'$ which lets us later efficiently obtain another near optimal solution $S''$.

Following the idea from~\cite{deo2024supermodular}, we assume that we \textbf{know} the vertices in $V$ with sufficiently high number of neighbours in $S'$. In particular we \textbf{know} the set
\[
    H := \{v\in V\,:\, \deg_{S'}(v) \geq \sigma n\} \,,
\]
and for each $v\in H$ we \textbf{know} $\deg_{S'}(v)$. 

We can assume that we know the set $H$ and the degrees $\deg_{S'}(v)$, $v\in H$ through sampling~\cite{deo2024supermodular,daskalakis2011oblivious}. Sampling lets us obtain accurate estimates for $\deg_{S'}(v)$, $v\in V$ when $\deg_{S'}(v)$ is sufficiently high. We remark that the sampling procedure "guesses" the degrees, and repeats its guesses a polynomial number of times. With high probability one set of guesses is correct. We simply run the following procedures with each guess. From now on, we also assume that we \textbf{know} $|E(S'\cap H)|$. We can exhaustively guess $|E(S'\cap H)|$ as it takes some value in $\{1, \dots, n(n-1)/2\}$, and run the following procedures using each guess. At the very end, we take the best final solution $S''$ from these runs.

Since the vertices in $V \setminus H$ have low degree, we observe that ignoring the vertices outside of $H$ makes very little difference to $R(\cdot)$, more concretely
\[
 R(S^*)\geq R(S'\cap H)\geq R(S^*)-2\varepsilon/3\,.
\]

\textbf{First Attempt at a Linear Program.}
At this point, one can try to set up a linear program similar to the one in~\cite{deo2024supermodular}. Here is a rough idea of how such a linear program could look. The variable $x_v$, $v\in V$ represents whether an agent $v\in V$ is contained in the set $S'$.
  \begin{align}
      \min_{\{x_v\}_{v\in V}} \quad& \sum_{v\in H} \frac{c_v}{ \deg_{S'}(v)}  \cdot x_v \label{eq:overview_first_obj}\\
    \text{subject to} \quad& \sum_{v\in H}\deg_{S'}(v)\cdot x_v \ge 2\cdot |E(S'\cap H)|\label{eq:overview_first_const1}\\
       & \sum_{u \in N(v)} x_u \geq \deg_{S'}(v) &&\text{  for all } v\in H\label{eq:overview_first_const2}\\
        & 0 \leq x_v\leq 1 &&\text{  for all } v\in V\,.\label{eq:overview_first_const3} 
\end{align}
The objective function~\eqref{eq:overview_first_obj} attempts to minimize the negative effect of agents in $H$ on the function $L(\cdot)$. Note agents in $V \setminus H$ are omitted here, since we do not know $\deg_{S'}(v)$ for $v \in V \setminus H$. The constraint~\eqref{eq:overview_first_const1} attempts to guarantee a sufficiently high value of $R(\cdot)$. Constraint~\eqref{eq:overview_first_const1} ignores vertices outside of $H$, but as mentioned above this makes very little difference to $R(\cdot)$. The constraint in~\eqref{eq:overview_first_const2} guarantees that the degrees are behaving properly for agents in $H$. 

In the case when all costs are equal, one can also add the constraints $x_v=0$ for all $v\in V\setminus H$, see~\cite{deo2024supermodular}. Indeed, in that case a core of $S^*$ was a near optimal solution. However, in the general case, we need to be more careful with agents in $V\setminus H$ and cannot omit them from near optimal solutions. Further observe that constraint~\eqref{eq:overview_first_const2} potentially relies on neighbours in $V\setminus H$, and the objective function~\eqref{eq:overview_first_obj} does not account for the agents in $V\setminus H$. This becomes an issue, since in general it is not feasible to efficiently obtain accurate estimates for $\deg_{S'}(v)$, $v\in V\setminus H$. So, we cannot simply add constraints analogous to the constraint~\eqref{eq:overview_first_const2} for $v\in V\setminus H$, or add terms in the objective function for these agents.

\textbf{Final Linear Program.}
Now, we take a closer look at the agents in $V\setminus H$ and $H$. We will split all agents into four different groups $A$, $B$, $C$ and $D$ depending on their cost and on their presence in $H$. Intuitively this gives us the following information on whether or not to include these agents in our solution:
\begin{itemize}
    \item Agents in $A$ might be included; they have high degrees and reasonable costs.
    \item Agents in $B$ might be included; they have middling degrees and costs.
    \item Agents in $C$ should be included; they have low costs.
    \item Agents in $D$ should not be included; they have low degrees and/or high costs.
\end{itemize}

Now we define $A$, $B$, $C$ and $D$ depending on agent cost and the set of high degree agents $H$. Recall at this point we know the costs, the set $H$, and for each agent $v\in H$ we know $\deg_{S'}(v)$.
An agent $v\in H$ lies in $A$ if the agent satisfies the inequality~\eqref{eq:overview_degree_bound}. Otherwise an agent $v\in H$ lies in $D$.
An agent $v\in V\setminus H$ lies in $B$, $C$, $D$, respectively, if $c_v\in \left(\varepsilon/(2n),\, \frac{\sigma n(\sigma n+1)}{\funcPseudoCore}\right]$, $c_v\in \left[0, \frac{\varepsilon}{2n}\right]$, $c_v\in \left(\frac{\sigma n(\sigma n+1)}{\funcPseudoCore},\,+\infty\right)$. Note that the costs of agents in $D\setminus H$ are so high that they cannot satisfy the inequality~\eqref{eq:overview_degree_bound}, due to the definition of the set $H$. 

We now \textbf{know} the sets  $A$, $B$, $C$ and $D$. Now we can obtain a lower bound~$d_v$ on $\deg_{S'}(v)$ for all $v\in B\cap S'$ using the inequality~\eqref{eq:overview_degree_bound}. So we can set a final linear program.

   \begin{align}
      \min_{\{x_v\}_{v\in V}} \quad& \sum_{v\in A} \frac{c_v}{ \deg_{S'}(v)}  \cdot x_v \label{eq:overview_final_obj}\\
    \text{subject to} \quad& \sum_{v\in H}\deg_{S'}(v)\cdot x_v \ge 2\cdot |E(S'\cap H)|\label{eq:overview_final_const1}\\
       & \sum_{u \in N(v)} x_u \geq \deg_{S'}(v) &&\text{  for all } v\in A\label{eq:overview_final_const2}\\
       & \sum_{u \in N(v)} x_u \geq d_v\cdot x_v &&\text{  for all } v\in B\label{eq:overview_final_const3}\\
       & x_v=1 &&\text{  for all } v\in C\label{eq:overview_final_const4}\\
       & x_v=0 &&\text{  for all } v\in D\label{eq:overview_final_const5}\\
        & 0 \leq x_v\leq 1 &&\text{  for all } v\in V\,.
\end{align}

\textbf{Randomized Rounding.}
We can now solve the final linear program and obtain an optimal LP solution. Then we can try to apply randomized rounding to the found optimal solution. Unfortunately, due to concentration issues, this strategy does not immediately work. To make it work, we need to strengthen the constraint~\eqref{eq:overview_final_const3} in the final linear program, and before randomized rounding we need to preprocess the optimal solution for the final linear program. 

We remark that the objective function~\eqref{eq:overview_final_obj} of the final linear program still does not account for all agents, but only for the agents in $A$. However, we can show that if the randomized rounding has sufficient concentration with respect to the constraint~\eqref{eq:overview_final_const3} then the costs of all agents outside of $A$ can be ignored.

It is challenging to achieve the sufficient concentration with respect to the constraint~\eqref{eq:overview_final_const3}. For this we need to strengthen the previously obtained bounds $d_v$, $v\in B$. We prove that $d_v$, $v\in B$ can be set to the maximum between $\Omega(n/\ln\ln(n))$ and the previous bound obtained only from~\eqref{eq:overview_degree_bound}.

After that to achieve the concentration, we show how to obtain a near optimal (and also "near feasible") solution $x^*$ to the final linear program, where each non-zero $x^*_v$, $v\in A\cup B$ is sufficiently large, i.e. non-zero variables $x^*_v$, $v\in A\cup B$ are not too close to $0$. All these properties allow us to prove the desired concentration results and so successfully round the solution $x^*$ to a near optimal solution $S''$.

Finally, we recall that we assumed we \textbf{knew} the set $H$, the degrees $\deg_{S'}(v)$, $v \in H$, and $|E(S' \cap H)|$. As remarked earlier, these values are ultimately obtained through sampling and exhaustive inspection. Thus, we repeat our process many times and show that with high probability we will eventually obtain our approximately optimal rounded solution.

\textbf{Main Challenges.}
Let us summarize the main challenges  and the way we overcome them in our approach. 
 \textbf{Challenge 1.}
 It is unclear whether there exists a near optimal set $S'$ such that $\deg_{S'}(i)$ is $\Omega(n)$ for all $i\in S'$. Generally speaking, one can ignore every "cheap'' agent $i\in S'$ because their contribution to $L(\cdot)$ is negligible regardless of $\deg_{S'}(i)$. What about "expensive'' agents? Some straightforward modifications of techniques in \cite{deo2024supermodular} show that each "expensive'' agent $i\in S'$ has to satisfy $\deg_{S'}(i)=\Omega(\sqrt n)$, but that bound is too weak for later steps to work.
 \textbf{Our approach.} We introduce a concept of \emph{pseudo-core}, see  Section~\ref{sec:near_optimal_solution}. We then develop a procedure, called \emph{iterative pseudo-coring}, that iteratively increases costs for "expensive'' agents and removes agents that are not a part of the desired pseudo-core, see Algorithm~\ref{alg:iteratedCoring}. The iterative nature of the procedure allows us to control changes in functions that we use to approximate $L(\cdot)$. The total cost increase for "expensive'' agents guarantees that remaining "expensive'' agents have sufficiently high degree $\deg_{S'}(\cdot)$. In particular, we show that each "expensive'' agent $i\in S'$ has to satisfy $\deg_{S'}(i)=\Omega(n/\ln\ln n)$.  
    
 \textbf{Challenge 2.}
 Sufficiently accurate estimates for $\deg_{S'}(i)$, $i\in V$ can be efficiently obtained only for $i$ with $\deg_{S'}(i)=\Omega(n)$. 
\textbf{Our approach.} We still use the accurate estimate for each agent $i$ with $\deg_{S'}(i)=\Omega(n)$ which can be obtained through sampling. For all "expensive'' agents in $S'$ without accurate estimates for $\deg_{S'}(i)$, our iterative pseudo-coring guarantees a sufficiently high lower bound on $\deg_{S'}(\cdot)$ and we use this lower bound instead. 

\textbf{Challenge 3.}
The absence of accurate estimates for some "expensive'' agents in $S'$ prevents us from linearizing the terms for those agents in the function $L(\cdot)$. Moreover, since there are no accurate estimates for $\deg_{S'}(i)$ for many agents $i\in V$, we need to design a new LP. This new LP makes any naive rounding approach infeasible due to concentration requirements.
\textbf{Our approach.} With concerns about linearizing $L(\cdot)$ in mind, we demonstrate that we can ignore the terms in $L(\cdot)$ that correspond to agents without accurate estimates of their degrees. To circumvent the concentration issues, we design a preprocessing procedure for an optimal LP solution. Our preprocessing procedure guarantees that "expensive'' agents have their variable set to zero or have their variable set to sufficiently large value. The LP solution achieved by the preprocessing allows us to obtain necessary concentration results.

\section{Formal Statements}
Our algorithm heavily relies on the existence of a well-structured near optimal solution $S'$. We neither know the set $S'$ nor intend to find it. However, its existence allows us to set up an LP that is \emph{up to some extent} a relaxation of our problem.
The next lemma captures the desired properties of a near optimal solution $S'$. In all remaining, parts of the paper $S'$ refers to a set satisfying the properties in Lemma~\ref{lem:structure_core}.
\begin{lemma}\label{lem:structure_core}
    There is a set $S'\subseteq V$ which has the following properties
    \begin{enumerate}[(i)]
        \item \label{item:structure_core1} $g(S') \geq OPT- \varepsilon$
        \item \label{item:structure_core2} ${\big| \{v \in S' \mid \deg_{S'}(v) \geq  \frac{\varepsilon}{6}n \} \big| \geq \frac{\varepsilon}{6}n}=\funcPseudoCoreInternal$
        \item \label{item:structure_core3} for all $v \in S'$ either 
        \begin{enumerate}
            \item \label{item:structure_core3a} 
                $\deg_{S'}(v) \geq {\frac{(1 - \varepsilon)^2 \cdot \varepsilon^5 \ln(2)}{36e} \cdot \frac{n}{\ln \ln(n)}} = \delta^{-1} \cdot \frac{n}{\ln\ln(n)}$, \quad and \\
                $\deg_{S'}(v)(\deg_{S'}(v)+1)/c_v \geq \funcPseudoCore$, \quad or
            \item \label{item:structure_core3b} $c_v \leq \varepsilon/(2 n)$ 
        \end{enumerate}
    \end{enumerate}
\end{lemma}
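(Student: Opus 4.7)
The plan is to construct $S'$ by iteratively pruning $S^*$. Call a vertex $v$ of a set $S$ \emph{expensive-bad in $S$} if $c_v > \varepsilon/(2n)$ and $v$ violates condition (iii)(a) relative to $S$, that is, $\deg_S(v) < \delta^{-1} n/\ln\ln(n)$ or $\deg_S(v)(\deg_S(v)+1)/c_v < \gamma(n^2 + 6n/\varepsilon)$. Starting from $S_0 := S^*$, at round $k$ I would form $B_k :=$ the set of expensive-bad vertices in $S_k$ and set $S_{k+1} := S_k \setminus B_k$; the procedure stops as soon as $B_k = \varnothing$, and I set $S' := S_k$. Condition~(iii) then holds by construction, since every surviving vertex is either cheap (falling into (iii)(b)) or passes both degree checks (falling into (iii)(a)).

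The technical heart is the bound $g(S^*) - g(S') \le \varepsilon$. Removing a single vertex $v$ from $S$ changes the two factors of $g$ by
\[
R(S\setminus v) - R(S) = -\deg_S(v)/\binom{n}{2}, \quad L(S\setminus v) - L(S) = \frac{c_v}{\deg_S(v)} - \sum_{u \in N(v) \cap (S \setminus v)} \frac{c_u}{\deg_S(u)(\deg_S(u)-1)},
\]
so the first-order cost of the removal inside $g = L\cdot R$ is roughly $L \cdot \deg_S(v)/\binom{n}{2}$ while the first-order saving is roughly $R \cdot c_v/\deg_S(v)$. Comparing these two expressions naturally yields the ratio $\deg_S(v)(\deg_S(v)+1)/c_v$, and its failure to exceed $\gamma(n^2 + 6n/\varepsilon)$ for $v \in B_k$ is exactly what keeps the per-vertex net change in $g$ small enough that summing over all removed vertices gives $g(S') \geq g(S^*) - \varepsilon$. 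The neighbour terms in $\Delta L$ and the second-order corrections are absorbed by the $6n/\varepsilon$ slack built into $\gamma(n^2 + 6n/\varepsilon)$.

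For~(ii) I would track $R$ separately through the same procedure. The per-removal analysis applied to $R$ alone yields a cumulative drop $R(S^*) - R(S') \le \varepsilon/3$; since $L(S^*) \le 1$ gives $R(S^*) \ge OPT \ge \varepsilon$, this forces $R(S') \ge 2\varepsilon/3$, i.e.\ $|E(S')| \ge \varepsilon \binom{n}{2}/3$ for $n$ large. A pigeonhole argument on $\sum_{v \in S'} \deg_{S'}(v) = 2|E(S')|$ then produces at least $\gamma' n = \varepsilon n/6$ vertices of degree at least $\varepsilon n/6$, giving~(ii). The main obstacle is the cascade effect: removing $B_k$ lowers degrees in $S_{k+1}$ and can make new vertices expensive-bad in $S_{k+1}$. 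The two-part shape of~(iii)(a) -- the absolute floor $\delta^{-1} n/\ln\ln(n)$ together with the pseudo-marginal threshold $\gamma(n^2 + 6n/\varepsilon)$ -- is calibrated precisely so that the cascade terminates after $O(\ln\ln n)$ rounds and so that each round can be charged against a tiny increment in an ``effective cost'' potential, telescoping the losses in both $g$ and $R$ within the required budgets.
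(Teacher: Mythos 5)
Your approach—iteratively deleting ``expensive-bad'' vertices until conditions (iii)(a)/(iii)(b) hold—is genuinely different from the paper's, and it has a fundamental gap precisely where the paper spends its main technical effort.

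The core issue is the absolute degree floor $\deg_{S'}(v) \geq \delta^{-1} n/\ln\ln(n)$. This cannot be enforced by direct deletion, because a vertex can fail the floor while comfortably satisfying the pseudo-marginal threshold: take $c_v$ just above $\varepsilon/(2n)$ and $\deg_S(v) = \Theta(\sqrt{n})$; then $\deg_S(v)(\deg_S(v)+1)/c_v$ is on the order of $n^{2}/\varepsilon$, far above $\gamma(n^2 + (6/\varepsilon)n)$, yet $\deg_S(v) \ll n/\ln\ln n$. Your per-removal accounting (``the first-order saving is roughly $R\cdot c_v/\deg_S(v)$ \ldots comparing yields the ratio $\deg_S(v)(\deg_S(v)+1)/c_v$'') only justifies removing a vertex whose \emph{pseudo-marginal} is too small, so removing these floor-violators is simply unaccounted for in $L$. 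In particular, when you delete such a vertex, every neighbour $u$ has its $c_u/\deg(u)$ term increase; you have no mechanism to argue these increases cancel against a (tiny) $c_v/\deg(v)$ saving. This is exactly the obstruction the paper names as Challenge~1: naive pseudo-coring only produces $\deg_{S'}(v) = \Omega(\sqrt{n})$ for expensive agents, ``too weak for later steps.''

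The paper's resolution is Algorithm~\ref{alg:iteratedCoring}: it \emph{only ever} enforces the pseudo-marginal constraint via Algorithm~\ref{alg:pseudoCore}, never the degree floor, but runs $M = \log_2\ln(\gamma(n + 6/\varepsilon))$ rounds, adding an artificial cost $\Delta_k$ to all non-cheap agents before each round. This makes the floor fall out indirectly: at the end every surviving non-cheap vertex has $c_{i,M}\geq\Delta_{M-1}$, so the pseudo-marginal bound forces $\deg_{S_M}(i) \geq \sqrt{\gamma(\cdots)\Delta_{M-1}} - 1 = \Omega(n/\ln\ln n)$ (Lemma~\ref{lem:highDegrees}). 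Controlling the $L$-loss requires two additional ideas you do not supply: (a) the surrogate $\mathcal L_k(S) = 1 - \sum_{i\in S\setminus C} c_{i,k}/(\deg_S(i)+1)$, whose $+1$ in the denominator makes each greedy pseudo-coring deletion \emph{monotone increasing} in $\mathcal L_k$ (Lemma~\ref{lem:costImprovement}) and dodges division-by-zero when degrees hit $0$; your raw $\Delta L$ formula has neither property. And (b) the schedule $\Delta_k$ is a doubly-geometric sequence chosen so each cost bump costs only $2\varepsilon/M$ in $\mathcal L$ (Lemma~\ref{lem:switchingLk}), so the whole telescope $L(S^*) \leq \mathcal L_1(S^*) \leq \cdots \leq L(S') + 4\varepsilon$ stays within budget. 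Your ``effective cost potential \ldots cascade terminates after $O(\ln\ln n)$ rounds'' gestures at this but never defines the potential; the $\ln\ln n$ in the paper is the number of \emph{outer} cost-increase rounds, not a bound on your single cascade, and without the cost increases the floor-driven removals have no charging scheme at all.
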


The LPs in the next theorem and lemma, Lemma~\ref{lem:main_supermodular1} and Theorem~\ref{thm:main_supermodular2}, will be used by our PTAS but these LPs are formulated using the set $S'$ from Lemma~\ref{lem:structure_core} above. Unfortunately, Lemma~\ref{lem:structure_core} guarantees us the existence of such a set $S'$ but does not provide an efficient way to find it. After a closer look, we will see that LPs in Lemma~\ref{lem:main_supermodular1} and Theorem~\ref{thm:main_supermodular2} rely only on the following information $\hat d_v$ for $v\in A$, $d_v$ for $v\in B$, $|E(S')|$ and the sets $A$, $B$, $C$, $D$. While $|E(S')|$ can take at most $O(n^2)$ values, i.e. $0$, $1$, \ldots, $n(n+1)/2$, and so we can try out all of these $O(n^2)$ possible values. 
We will see that the remaining  information (other than $E(S')$) can be efficiently "obtained" without knowing $S'$. This can be done through sampling or exhaustive inspection due to the properties guaranteed by Lemma~\ref{lem:structure_core}. Before doing that, let us define these parameters and sets. 

 Note that in the next definition the agents are partitioned into four parts $A$, $B$, $C$ and $D$. Let us  give some intuition behind this partition. The "cheap'' agents are in $C$, i.e. the agents satisfying~\eqref{item:structure_core3b} in Lemma~\ref{lem:structure_core}. The agents in $A$ are the agents that could potentially lie in the set mentioned in~\eqref{item:structure_core2} and at the same time could satisfy~\eqref{item:structure_core3a} in Lemma~\ref{lem:structure_core}. The agents in $B$ are the agents that cannot lie in the set mentioned in~\eqref{item:structure_core2} but could satisfy~\eqref{item:structure_core3a} in Lemma~\ref{lem:structure_core}.  The agents in $D$ can satisfy neither~\eqref{item:structure_core3a} nor~\eqref{item:structure_core3b} in Lemma~\ref{lem:structure_core}. We note that the second bound of~\eqref{item:structure_core3a} in Lemma~\ref{lem:structure_core} is equivalent to $\deg_{S'}(v) \geq \sqrt{c_i\cdot \funcPseudoCore+(1/4)}-(1/2)$.  Here, $\hat d_i$, $i\in H$  represent estimates of degrees $\deg_{S'}(i)$.

\begin{definition}\label{def:lpParam}
    Let $S' \subseteq V$ be a set as in Lemma~\ref{lem:structure_core}. Define $C=\{i\in V\,:\, c_i\leq \varepsilon/(2n)\}$. Let us consider $H$ such that $\{i\in S'\setminus C\,:\, \deg_{S'}(i)\geq \sigma\cdot n\}\subseteq H\subseteq \{i\in V \setminus C \,:\, \deg_{S'}(i)\geq (1-\varepsilon)\sigma\cdot n/(1+\varepsilon)\}$. Let $\hat d_i$, $i\in H$ be such that we have $(1-\varepsilon)\deg_{S'}(i) \leq \hat d_i\leq (1+\varepsilon)\deg_{S'}(i)$.
    Let $d_i$, $i\in V \setminus C$ be defined as
    \[
    d_i := \max\left(\frac{(1 - \varepsilon)^2 \cdot \varepsilon^5 \ln(2)}{36e} \cdot \frac{n}{\ln \ln(n)},\qquad \sqrt{c_i\cdot \funcPseudoCore+\frac{1}{4}}-\frac{1}{2}\right) \,.
    \]
    Now define $A := \{i\in H\,:\, \hat d_i / (1-\epsilon) \geq d_i\}$, $D := \left( H \setminus A \right) \cup \{ i \in V\setminus\left(H\cup C\right) \,:\, d_i \geq \sigma n \}$, and $B:=V\setminus\left(A \cup C \cup D \right)$.  
\end{definition}

Through Lemma~\ref{lem:sampler} and the discussion after it, we see that with a sufficiently high probability we can efficiently obtain the set $H$ and  $\hat d_i$, $i\in H$ through sampling.
Note that with $H$ and  $\hat d_i$, $i\in H$ at hand, we immediately obtain $\hat d_v$ for $v\in A$, $d_v$ for $v\in B$ and the sets $A$, $B$, $C$, $D$. For now, we defer the discussion about obtaining $H$ and  $\hat d_i$, $i\in H$ until Lemma~\ref{lem:sampler}.

Now Lemma~\ref{lem:main_supermodular1} establishes that the objective function in our LP reflects the value of $L(\cdot)$. Note that in the objective function we consider only agents in $A$, and we ignore those in $B$ and $C$. We will later show the agents in $B$ and $C$ do not contribute much to $L(\cdot)$.

\begin{lemma}\label{lem:main_supermodular1}
    Consider the following LP and let $OPT_{LP}$ be its optimal value.
    \begin{align}
      \min_{\{x_v\}_{v\in V}} \quad& \sum_{v\in A} \frac{c_v}{\hat d_v}  \cdot x_v \label{lp_intro:objective}\\
    \text{subject to} \quad& \sum_{v\in A}\hat{d}_v x_v \ge 2(1-\varepsilon) \cdot |E(S')| \label{lp_intro:edges}\\
       & \sum_{u \in N(v)} x_u \geq \left(\frac{1}{1+\varepsilon}\right) \cdot \hat d_v &&\text{  for all } v\in A \quad \label{lp_intro:high_degree}\\
        & \sum_{u \in N(v)} x_u \geq \left(\frac{1}{1+\varepsilon}\right) \cdot d_v \cdot x_v &&\text{  for all } v\in B \quad \label{lp_intro:low_degree} \\
        & x_v=1   &&\text{  for all } v\in C \label{lp_intro:cheap} \\
        & x_v=0   &&\text{  for all } v\in D \label{lp_intro:expensive} \\
        & 0 \leq x_v\leq 1 &&\text{  for all } v\in V \quad  \label{lp_intro:bounds}
\end{align}

 Then $OPT_{LP}$ is less than or equal to $\frac{1}{1-\varepsilon} \left(1- L(S')\right)$, i.e. less than or equal to \[\frac{1}{1-\varepsilon}\left( \sum_{v \in S'} \frac{c_v}{\deg_{S'}(v)} \right)\,.\]
\end{lemma}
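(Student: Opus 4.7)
The plan is to exhibit an explicit feasible LP solution and bound its objective. The natural candidate is the characteristic vector $x^*$ of $S' \cup C$, i.e.\ $x^*_v = 1$ if $v \in S' \cup C$ and $x^*_v = 0$ otherwise. First I verify the forced constraints. Constraint~\eqref{lp_intro:cheap} holds by construction, and for~\eqref{lp_intro:expensive} it suffices to show $D \cap S' = \emptyset$. Take any $v \in S'$: by Lemma~\ref{lem:structure_core}~(iii) either $v \in C$ (hence $v \notin D$), or case~(iii)(a) applies and forces $\deg_{S'}(v) \geq \max\bigl(\delta^{-1}\cdot n/\ln\ln n,\ \sqrt{c_v \cdot \funcPseudoCore + 1/4}-1/2\bigr) = d_v$. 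In the latter case, if $v \in H$ then the sampling estimate $\hat d_v \geq (1-\varepsilon)\deg_{S'}(v) \geq (1-\varepsilon)d_v$ places $v$ in $A$; while if $v \notin H$ then $\deg_{S'}(v) < \sigma n$ forces $d_v < \sigma n$, placing $v$ in $B$. Thus $v \notin D$ in all cases, and constraint~\eqref{lp_intro:bounds} is immediate from $x^*_v \in \{0,1\}$.

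Next I verify the degree-type constraints. For~\eqref{lp_intro:high_degree} and $v \in A$, every $u \in N(v) \cap S'$ satisfies $x^*_u = 1$, so $\sum_{u \in N(v)} x^*_u \geq \deg_{S'}(v) \geq \hat d_v/(1+\varepsilon)$, using $\hat d_v \leq (1+\varepsilon)\deg_{S'}(v)$. For~\eqref{lp_intro:low_degree} and $v \in B$: if $x^*_v = 0$ the constraint is vacuous; otherwise $v \in S' \setminus C$ since $B \cap C = \emptyset$, case~(iii)(a) yields $\deg_{S'}(v) \geq d_v$, and the constraint follows. The crucial step is the edge constraint~\eqref{lp_intro:edges}. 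Since $A \cap C = \emptyset$, I have $\sum_{v \in A}\hat d_v x^*_v = \sum_{v \in A \cap S'}\hat d_v \geq (1-\varepsilon)\sum_{v \in A \cap S'}\deg_{S'}(v)$, and the first paragraph shows $A \cap S' = H \cap S'$. To reach $2(1-\varepsilon)|E(S')|$ I need $\sum_{v \in A \cap S'}\deg_{S'}(v) \geq 2|E(S')|$, i.e.\ that the edges incident to $S' \setminus A \subseteq (B\cap S') \cup (C\cap S')$ are negligible. Each $v \in B \cap S'$ has $\deg_{S'}(v) < \sigma n = \varepsilon^6 n$, so those contribute $O(\varepsilon^6 n^2)$, negligible against $|E(S')| = \Omega(\varepsilon n^2)$ under $OPT \geq \varepsilon$. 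I anticipate the trickiest remaining point to be absorbing the contribution of edges touching $C \cap S'$; I expect this to be handled by a finer degree accounting or a small perturbation of $x^*$ on $A$, exploiting that cheap agents have $c_v \leq \varepsilon/(2n)$ and hence contribute little to the overall quantities.

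Finally, the objective bound is straightforward: $\sum_{v \in A}\tfrac{c_v}{\hat d_v}x^*_v = \sum_{v \in A \cap S'}\tfrac{c_v}{\hat d_v} \leq \tfrac{1}{1-\varepsilon}\sum_{v \in A \cap S'}\tfrac{c_v}{\deg_{S'}(v)} \leq \tfrac{1}{1-\varepsilon}\sum_{v \in S'}\tfrac{c_v}{\deg_{S'}(v)} = \tfrac{1}{1-\varepsilon}\bigl(1-L(S')\bigr)$, using $\hat d_v \geq (1-\varepsilon)\deg_{S'}(v)$ for $v \in A \subseteq H$ in the first inequality and dropping the nonnegative terms for $v \in S' \setminus A$ in the second. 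This yields the claimed upper bound on $OPT_{LP}$.
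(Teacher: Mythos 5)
Your approach is the same as the paper's: exhibit the characteristic vector of $S' \cup C$ as a feasible solution and bound its objective value. Your objective calculation matches the paper's, and your verification of constraints~\eqref{lp_intro:high_degree}--\eqref{lp_intro:bounds} is correct, in fact more careful than the paper, which simply asserts that $x'$ is ``clearly'' feasible without verifying anything.

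The gap you flag in the edge constraint~\eqref{lp_intro:edges} is real, and your anticipated fix will not close it, because the deficit is on the wrong side. You reduce feasibility to $\sum_{v \in A \cap S'}\deg_{S'}(v) \geq 2|E(S')|$, but the handshake identity gives
\[
\sum_{v \in A \cap S'}\deg_{S'}(v) \;=\; 2|E(S' \cap A)| + |E(S'\cap A,\,S'\setminus A)| \;=\; 2|E(S')| - |E(S'\cap A,\,S'\setminus A)| - 2|E(S'\setminus A)|\,,
\]
so the left side is \emph{at most} $2|E(S')|$, with strict inequality whenever any agent in $S'\setminus A$ has a neighbour in $S'$. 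The $(1-\varepsilon)$ on the right of~\eqref{lp_intro:edges} is exactly consumed by $\hat d_v \geq (1-\varepsilon)\deg_{S'}(v)$, leaving zero slack to absorb the deficit. Your bound on the $B$-edges ($O(\sigma n^2)$) is additive and cannot be converted into the multiplicative $(1-\varepsilon)$ margin the constraint demands, and you correctly note that $C\cap S'$ can contribute many edges; so the intended ``perturbation'' cannot be carried out as stated. This is not a flaw you introduced: the paper's own proof offers no verification, and the overview section writes the analogous constraint with $|E(S'\cap H)|$ rather than $|E(S')|$, which suggests the quantity in~\eqref{lp_intro:edges} is meant to be $|E(S'\cap A)|$ (or a similar restriction), in which case feasibility of $x'$ is immediate from the handshake bound $\sum_{v\in A\cap S'}\deg_{S'}(v)\geq 2|E(S'\cap A)|$. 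As written, however, neither your argument nor the paper's one-line assertion establishes feasibility of $x'$ for~\eqref{lp_intro:edges}, so the upper bound on $OPT_{LP}$ does not yet follow.
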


Before we move ahead let us explain the intuition behind the constraints in the LP~\eqref{lp_intro:objective}-\eqref{lp_intro:bounds} from Lemma~\ref{lem:main_supermodular1}. Let us assume that we are dealing with an integral solution $\bar{x}$ and the corresponding set $\bar{S}$, $\bar{S}\subseteq V$.
The objective function~\eqref{lp_intro:objective} is estimating the portion of the principal's reward transferred to the agents in $A$. Transfers done to all other agents are not taken into account by the LP. Later, we see that with such an objective function $OPT_{LP}$ is sufficiently close to $1-L(\bar S)$. The constraint~\eqref{lp_intro:edges} guarantees that the value $R(\bar{S}\cap A)$ is at least $(1-\varepsilon)\cdot R(S')$. The constraints~\eqref{lp_intro:high_degree} and~\eqref{lp_intro:low_degree} guarantee that for every $v\in A$ and $v\in \bar{S}\cap B$ the value $\deg_{\bar{S}}(v)$ has a lower bound similar to the one in Lemma~\ref{lem:structure_core}. The constraint~\eqref{lp_intro:cheap} guarantees that all agents in $V$ with sufficiently small costs are included in $\bar{S}$. The constraint~\eqref{lp_intro:expensive} guarantees that we do not include agents that do not have a lower bound on $\deg_{\bar{S}}(v)$ similar to the one in Lemma~\ref{lem:structure_core}.

Next, Theorem~\ref{thm:main_supermodular2} establishes that there is a near-feasible solution $x^*$ which we can compute efficiently. This $x^*$ satisfies necessary properties for concentration in randomized rounding, is optimal, and is near feasible.

\begin{theorem}\label{thm:main_supermodular2}
 Let $OPT_{LP}$ be the optimal value of the LP~\eqref{lp_intro:objective}-\eqref{lp_intro:bounds}. Then
 there exists $x^*$ satisfying 
   \begin{align}
   &x^*_v>0 \implies x^*_v\geq 1/n^{1/2}&&\text{ for all } v\in A \label{lp_relax:nonzero_a} \\
    &x^*_v>0 \implies \sum_{u\in N(v) }x^*_u \geq \funcConcentration \qquad &&\text{ for all } v\in B \label{lp_relax:nonzero} \quad,
    \end{align}
and also
    \begin{align}
       &\sum_{v\in A} \frac{c_v}{\hat d_v}  \cdot x^*_v\leq  OPT_{LP}\label{lp_relax:objective}\\
    & \sum_{v\in A}\hat{d}_v x^*_v \ge 2(1-\varepsilon)^2 \cdot |E(S')| \label{lp_relax:edges}\\
       & \sum_{u \in N(v)} x^*_u \geq \left(\frac{(1-\varepsilon)^2}{1+\varepsilon}\right) \cdot \hat d_v &&\text{  for all } v\in A \quad \label{lp_relax:high_degree}\\
        & \sum_{u \in N(v)} x^*_u \geq \left(\frac{1-2\varepsilon (1+\varepsilon)}{1+\varepsilon}\right) \cdot d_v \cdot x^*_v &&\text{  for all } v\in B \quad \label{lp_relax:low_degree} \\
        & x_v^*=1   &&\text{  for all } v\in C \label{lp_relax:cheap} \\
        & x_v^*=0   &&\text{  for all } v\in D \label{lp_relax:expensive} \\
        & 0 \leq x_v^* \leq 1 &&\text{  for all } v\in V \quad  \label{lp_relax:bounds}\,.
\end{align}
Moreover, such an $x^*$ can be efficiently computed.
\end{theorem}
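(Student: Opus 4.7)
The strategy is to construct $x^*$ by preprocessing an optimal LP solution $x^{\mathrm{LP}}$ of the LP in Lemma~\ref{lem:main_supermodular1} in two phases: (a) for every $v \in A$ with $x^{\mathrm{LP}}_v < n^{-1/2}$ set $x^*_v := 0$, and otherwise $x^*_v := x^{\mathrm{LP}}_v$; (b) iteratively, while there exists $v \in B$ with $x^*_v > 0$ and $\sum_{u \in N(v)} x^*_u < \funcConcentration$, set $x^*_v := 0$. The values on $C$ and $D$ are mandated by \eqref{lp_intro:cheap} and \eqref{lp_intro:expensive}. Properties \eqref{lp_relax:nonzero_a} and \eqref{lp_relax:nonzero} then hold by construction, while the objective bound \eqref{lp_relax:objective} and the constraints \eqref{lp_relax:cheap}, \eqref{lp_relax:expensive}, \eqref{lp_relax:bounds} follow from $x^* \le x^{\mathrm{LP}}$ pointwise and the fact that coefficients in the objective are nonnegative.

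For the edge constraint \eqref{lp_relax:edges}, the lost mass from phase (a) is at most $n^{-1/2} \sum_{v \in A} \hat d_v = O(n^{3/2})$; by Lemma~\ref{lem:structure_core}\eqref{item:structure_core2} there are $\funcPseudoCoreInternal$ vertices of degree at least $\funcPseudoCoreInternal$ inside $S'$, so $|E(S')| = \Omega(n^2)$, and the slack $2(1-\varepsilon)\varepsilon|E(S')|$ dominates the loss once $n$ is large enough. For \eqref{lp_relax:high_degree} with $v \in A$, the loss to $\sum_{u \in N(v)} x^*_u$ is at most $n^{1/2}$ from phase (a), plus a contribution from phase (b) that I would bound by combining the termination condition of (b) with the LP inequality \eqref{lp_intro:low_degree} to obtain $x^{\mathrm{LP}}_u \le (1+\varepsilon) \funcConcentration/d_u$ for each zeroed $u \in B$, and then using the structural lower bound $d_u = \Omega(n/\ln\ln n)$ to control the sum over such $u \in N(v)$; since $\hat d_v = \Omega(n)$ for $v \in A \subseteq H$, the resulting total loss is easily absorbed by the slack $\Theta(\varepsilon \hat d_v)$ created by the $(1-\varepsilon)^2$ factor.

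The delicate constraint is \eqref{lp_relax:low_degree}. When $x^*_v = 0$ it holds trivially; for $v \in B$ that survives phase (b), I combine the original LP inequality $\sum_u x^{\mathrm{LP}}_u \ge d_v x^{\mathrm{LP}}_v / (1+\varepsilon)$ with the termination invariant $\sum_u x^*_u \ge \funcConcentration$. The weakening factor $\frac{1-2\varepsilon(1+\varepsilon)}{1+\varepsilon} = \frac{1}{1+\varepsilon} - 2\varepsilon$ creates slack $2\varepsilon d_v x^{\mathrm{LP}}_v$, which together with $d_v = \Omega(n/\ln\ln n)$ and the choice of threshold $\funcConcentration$ should absorb the cascade loss. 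The main obstacle I expect is precisely this cascade analysis: a chain of zeroings in (b) could in principle remove substantial mass from a single neighborhood, and taming it requires the inductive bound $x^{\mathrm{LP}}_u \le (1+\varepsilon) \funcConcentration / d_u$ at the moment of zeroing combined with the harmonic-type sum $\sum_{u} 1/d_u$ controlled via $d_u = \Omega(n/\ln\ln n)$, which is exactly why the pseudo-core construction of Lemma~\ref{lem:structure_core} was calibrated with the $n/\ln\ln n$ degree floor. Finally, efficient computation is immediate: one LP solve takes $\poly(n)$ time and phase (b) terminates in at most $|B| \le n$ rounds.
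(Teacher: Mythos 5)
Your phase (b) is only half of the paper's preprocessing. The paper's Algorithm~\ref{alg:fractionalCoring} has a \emph{second} removal criterion alongside the $\funcConcentration$ threshold: it also zeroes a vertex $w\in B$ whenever $\varepsilon d_w x^*_w < \sum_{i<j} x_{v_i} + n^{1/2}$, i.e.\ whenever the slack $\varepsilon d_w x^*_w$ is not large enough to absorb the mass already removed. Without this second criterion, two things break.

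First, constraint~\eqref{lp_relax:low_degree} does not follow from the invariant $\sum_{u\in N(v)} x^*_u \geq \funcConcentration$ alone. For a surviving $v\in B$, the required lower bound $\frac{1-2\varepsilon(1+\varepsilon)}{1+\varepsilon}\, d_v x^*_v$ can be on the order of $n$ (e.g.\ when $x^*_v \approx 1$ and $d_v = \Theta(n)$, which is allowed since $D$ only excludes $d_v \geq \sigma n$), whereas $\funcConcentration = \Theta(n^{1/2}\log n)$. So the threshold guarantee is far too weak; you genuinely need the second removal rule, whose exit condition directly yields $\varepsilon d_v x^*_v \geq \sum_i x_{v_i} + n^{1/2}$ and hence lets you subtract the removed mass from the original LP constraint~\eqref{lp_intro:low_degree}.

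Second, the step ``at the moment of zeroing $u\in B$, combine $\sum_{w\in N(u)} x^*_w < \funcConcentration$ with~\eqref{lp_intro:low_degree} to get $x^{\mathrm{LP}}_u \leq (1+\varepsilon)\funcConcentration/d_u$'' is invalid: \eqref{lp_intro:low_degree} bounds $d_u x^{\mathrm{LP}}_u$ by $\sum_{w\in N(u)} x^{\mathrm{LP}}_w$, not by the already-depleted $\sum_{w\in N(u)} x^*_w$. The gap between the two is exactly the cascade loss you are trying to bound, so the argument is circular. The paper sidesteps this by doing a delicate induction (their inequality~\eqref{eq:fractionalcoring}) that bounds each $x_{v_j}$ by $(\bar\kappa + n^{1/2})\bar\delta_\varepsilon(\bar\delta_\varepsilon+1)^{j-1}$ using both removal criteria together with the LP feasibility fact $x_{v_j}/\bar\delta \leq \sum_{u\in N(v_j)} x_u$; summing this and using $\bar\delta_\varepsilon = \Theta(\ln\ln n / n)$ gives a total removed mass of $(\bar\kappa + n^{1/2})\cdot(\ln n)^{\alpha}$, which is what absorbs into the $(1-\varepsilon)^2$ slack in~\eqref{lp_relax:high_degree} and~\eqref{lp_relax:edges}. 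In short: your phase (a) matches the paper, but phase (b) is missing the second rule, and the cascade bound you sketch rests on an inference that does not hold.
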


Finally, Theorem~\ref{thm:main_supermodular3} establishes that we can indeed round $x^*$ to a near optimal solution.

\begin{theorem}\label{thm:main_supermodular3}
 Let us be given a vector $x^*$ as in Theorem~\ref{thm:main_supermodular2}, i.e. a vector $x^*$ satisfying~\eqref{lp_relax:nonzero_a}-\eqref{lp_relax:bounds}. Let us construct a set $S''$ by independently including each $v\in A\cup B$ into the set $S''$ with probability $x^*_v$; and afterwards including each $v\in C$ into the set $S''$ if $N(v)\cap S''\neq \varnothing$. Then with probability at least $1- \sqrt{\varepsilon} - 1/n$ we obtain $g(S'')\geq g(S')-4\sqrt{\epsilon} \geq OPT - 5\sqrt{\epsilon}$.
\end{theorem}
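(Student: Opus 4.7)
The plan is to control $R(S'')$ and $L(S'')$ separately via concentration, then combine them through $g(S'') = L(S'') R(S'')$. Concretely, I aim to show that with probability at least $1 - \sqrt{\varepsilon} - 1/n$ both $R(S'') \geq R(S') - O(\varepsilon)$ and $L(S'') \geq L(S') - O(\sqrt{\varepsilon})$ hold simultaneously; since $L(S'), R(S') \in [0,1]$ this gives $g(S'') \geq g(S') - O(\sqrt{\varepsilon})$, and Lemma~\ref{lem:structure_core}~\eqref{item:structure_core1} then yields the $OPT - 5\sqrt{\varepsilon}$ bound after tuning the constant in front of $\sqrt{\varepsilon}$ to $4$.

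The first step is the degree-concentration event $E_1$: for every $v \in A$, $\deg_{S''}(v) \geq (1 - O(\varepsilon)) \hat d_v$, and for every $v \in B \cap S''$, $\deg_{S''}(v) \geq (1 - O(\varepsilon)) d_v$. Writing $\deg_{S''}(v) = \deg_{S'' \cap (A \cup B)}(v) + |N(v) \cap C \cap S''|$, the $C$-contribution can only help, and the random part is a sum of independent Bernoullis with mean $\sum_{u \in N(v) \cap (A \cup B)} x^*_u$, bounded below by constraints~\eqref{lp_relax:high_degree} and~\eqref{lp_relax:low_degree}. Because $\hat d_v = \Omega(\varepsilon^6 n)$ for $v \in A$ by the definition of $H$, and $\sum_{u \in N(v)} x^*_u \geq \kappa n^{1/2} \ln(2n^2)$ for $v \in B$ with $x^*_v > 0$ by~\eqref{lp_relax:nonzero}, a multiplicative Chernoff bound gives each deviation with probability at most $1/n^3$, and a union bound over the $\leq 2n$ relevant vertices yields $\Pr[E_1] \geq 1 - O(1/n)$.

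For $R(S'')$ I use the handshaking-style lower bound $|E(S'')| \geq \tfrac{1}{2} \sum_{v \in A \cap S''} \deg_{S''}(v)$, replace each $\deg_{S''}(v)$ by $(1 - O(\varepsilon)) \hat d_v$ on $E_1$, and concentrate $Z := \sum_{v \in A \cap S''} \hat d_v$ around its mean $\sum_{v \in A} x^*_v \hat d_v \geq 2(1-\varepsilon)^2 |E(S')|$ (via~\eqref{lp_relax:edges}) using Hoeffding with weights $\hat d_v \leq n$, yielding an event $E_2$ of probability $1 - O(1/n)$ on which $R(S'') \geq (1 - O(\varepsilon)) R(S')$; in the degenerate regime where $|E(S')|$ is too small for concentration, $R(S')$ is itself already $O(\varepsilon)$ and the bound is trivial. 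For $L(S'')$, I decompose $1 - L(S'') = Y_A + Y_B + Y_C$. On $E_1$, $Y_A \leq (1 + O(\varepsilon)) \sum_{v \in A \cap S''} c_v / \hat d_v$, whose expectation is at most $OPT_{LP} \leq (1 - L(S'))/(1-\varepsilon)$ by~\eqref{lp_relax:objective} and Lemma~\ref{lem:main_supermodular1}. The key structural observation is that $v \in A$ forces $\hat d_v^2 = \Omega(c_v \varepsilon^4 n^2)$, so every term $c_v / \hat d_v$ is $O(1/(\varepsilon^4 n))$; combined with the $x^*_v \geq 1/\sqrt{n}$ floor from~\eqref{lp_relax:nonzero_a}, a Bernstein-type inequality (or a carefully applied Markov) produces the additive bound $Y_A \leq (1 - L(S')) + O(\sqrt{\varepsilon})$ on an event $E_3$ of probability $1 - \sqrt{\varepsilon}$. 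The $Y_B$ term is deterministically $o(\sqrt{\varepsilon})$ on $E_1$: the definition of $B$ forces $d_v < \sigma n$, hence $c_v = O(\sigma^2/\gamma) = O(\varepsilon^8)$, and with $|B| \leq n$ and $\deg_{S''}(v) = \Omega(n/\log\log n)$ we get $Y_B = O(\varepsilon^8 \log\log n)$. The $Y_C$ term is at most $|C| \cdot \varepsilon/(2n) \leq \varepsilon/2$ because $\deg_{S''}(v) \geq 1$ for every $v \in C \cap S''$. Assembling gives $L(S'') \geq L(S') - O(\sqrt{\varepsilon})$.

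The main obstacle is the additive control of $Y_A$: naive Markov on $Y_A$ only produces a multiplicative deviation, which is insufficient when $1 - L(S')$ is small, forcing us to exploit both the pseudo-core bound $c_v/\hat d_v = O(1/(\varepsilon^4 n))$ and the $x^*_v \geq 1/\sqrt n$ property of~\eqref{lp_relax:nonzero_a} so that a variance-sensitive concentration closes the gap. A union bound over $\neg E_1$, $\neg E_2$, $\neg E_3$ bounds the failure probability by $\sqrt{\varepsilon} + O(1/n)$, and combining the bounds, $g(S'') = L(S'') R(S'') \geq (L(S') - O(\sqrt{\varepsilon}))(R(S') - O(\varepsilon)) \geq g(S') - O(\sqrt{\varepsilon})$, delivering the claim after calibrating the implicit constants.
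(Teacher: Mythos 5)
Your overall scheme matches the paper's: prove concentration events, lower-bound $R(S'')$ via the handshake lemma and constraint~\eqref{lp_relax:edges}, decompose $1 - L(S'') = Y_A + Y_B + Y_C$, bound each piece, and multiply. The $Y_A$ and $Y_C$ handling is essentially sound (the paper uses Hoeffding on $Y_A$ with failure probability $1/n^2$ rather than $\sqrt{\varepsilon}$; your $c_v/\hat d_v = O(1/(\varepsilon^4 n))$ observation is the right ingredient), but there is a genuine gap in the $Y_B$ bound that breaks the argument.

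You claim $Y_B$ is \emph{deterministically} $o(\sqrt{\varepsilon})$ on $E_1$, citing $c_v = O(\varepsilon^8)$ for $v \in B$ and $\deg_{S''}(v) = \Omega(n/\log\log n)$. Neither piece survives scrutiny. First, for $v \in B$ constraint~\eqref{lp_relax:nonzero} only guarantees $\sum_{u\in N(v)} x^*_u \geq \kappa n^{1/2}\ln(2n^2)$ when $x^*_v > 0$, and constraint~\eqref{lp_relax:low_degree} scales with $x^*_v$ which may be arbitrarily small; so the degree lower bound you actually get on $E_1$ is $\Omega(n^{1/2}\ln n / \varepsilon^2)$, not $\Omega(n/\log\log n)$. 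This already makes your worst-case estimate $Y_B = O(\varepsilon^{10} n^{1/2}/\ln n)$, which diverges. Second, even with your optimistic $\Omega(n/\log\log n)$ bound, $O(\varepsilon^8 \log\log n)$ is \emph{not} $o(\sqrt{\varepsilon})$ --- it is unbounded in $n$ for fixed $\varepsilon$, so you would not get a PTAS. The paper's fix is probabilistic and depends on an essential cancellation you miss: conditioning on $v \in S''$ (probability $x^*_v$) and using~\eqref{lp_relax:low_degree} gives $\EE[c_v/\deg_{S''}(v) \mid v \in S''] \lesssim c_v/(d_v\cdot x^*_v)$, and multiplying by $\Pr[v \in S''] = x^*_v$ telescopes the $x^*_v$ away, yielding $\EE[Y_B] \lesssim \sum_{v\in B} c_v/d_v \leq \varepsilon$; Markov then gives $\Pr[Y_B \geq \sqrt{\varepsilon}] \leq \sqrt{\varepsilon}$. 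This is the sole source of both the $\sqrt{\varepsilon}$ additive loss and the $\sqrt{\varepsilon}$ term in the failure probability, which you misattribute to $Y_A$. Without this cancellation argument your proof does not close.
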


We observe that Theorem~\ref{thm:main_supermodular3} allows us to independently  repeat the rounding in~Theorem~\ref{thm:main_supermodular3}  to achieve a good approximation with high probability. In particular, if we independently run our algorithm $\ln(1/n) / \ln(\sqrt{\varepsilon} + 1/n)$ times then we obtain a set $S''$ with $g(S'') \geq OPT - 5\sqrt{\epsilon}$ with probability at least $1 - 1/n$. This yields our desired Main Theorem~\ref{thm:intro_supermodular}.

Finally, let us show that we can efficiently set up the constraints in the LP~\eqref{lp_intro:objective}-\eqref{lp_intro:bounds}, and so we can then use Theorem~\ref{thm:main_supermodular3} and Theorem~\ref{thm:main_supermodular2} to obtain a near optimal solution.
Let us explain the way we set up the LP~\eqref{lp_intro:objective}-\eqref{lp_intro:bounds}. One of the parameters present in the LP  is $|E(S')|$. As mentioned above,  $|E(S')|$ can take only integral values between $0$ and $n(n-1)/2$ and hence permits us to do exhaustive search. It remains to verify that the set $H$ and the values $\hat{d}_v$ also can be found. For this, we use the oblivious samplers of \cite{deo2024supermodular,daskalakis2011oblivious}. 

\begin{lemma}[\cite{deo2024supermodular,daskalakis2011oblivious}]\label{lem:sampler}
    Let $K\subseteq V$ be any subset of nodes with $|K| \geq \frac{\varepsilon}{6} \cdot n$.
    There is an $O\left(n^{\poly\left(\frac{1}{\varepsilon}\right)}\right)$-time algorithm unaware of the nodes $K$ that computes $t = O\left(n^{\poly\left(\frac{1}{\varepsilon}\right)}\right)$ different marginal value estimates $\estMarginal{v}^1,\dots, \estMarginal{v}^t$ of $\marginal{v}{K}$ for each node $v \in V$ such that with probability at least $1-\frac{1}{\poly(n)}$ there exists an $j \in [t]$ whose estimates satisfy,
    \begin{enumerate}[(i)]
        \item \label{item:sampler1} $\Pr\left[ \estMarginal{v}^j < (1+\varepsilon) \sigma \cdot n \right]\geq 1 - \frac{1}{2 n^3}$, \quad $\forall v\in V$ with $\marginal{v}{K} < \frac{\varepsilon (1+\varepsilon)\sigma}{9} n$
        
        \item \label{item:sampler2}  $\Pr\left[  (1-\varepsilon) \cdot \marginal{v}{K} \leq \estMarginal{v}^j \leq  (1+\varepsilon) \cdot \marginal{v}{K} \right]\geq 1 - \frac{1}{2  n^3}$, \quad $\forall v\in V$ with $\marginal{v}{K} \geq \frac{\varepsilon (1+\varepsilon) \sigma}{9} n$
    \end{enumerate}
\end{lemma}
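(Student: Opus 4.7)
The plan is to follow the standard oblivious sampling strategy of~\cite{daskalakis2011oblivious}, as already adapted in~\cite{deo2024supermodular}. Since the algorithm does not know $K$, the key idea is: (a) take one polylogarithmic-sized uniform sample $T \subseteq V$ that, with overwhelming probability, contains a large enough random sample of $K$; (b) enumerate \emph{all} possible values of $T \cap K$ (together with all possible values of $|K|$), producing one estimate per guess; (c) argue that the guess corresponding to the true $T \cap K$ and $|K|$ yields accurate estimates, so at least one of the produced estimates $\estMarginal{v}^1,\ldots,\estMarginal{v}^t$ works for every $v$ simultaneously.

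Concretely, I would draw $T \subseteq V$ uniformly at random of size $s = \Theta(\poly(1/\varepsilon)\log n)$. Since $|K|\ge \varepsilon n/6$, a Chernoff bound yields $|T\cap K|\ge \varepsilon s/12$ with probability $1-1/\poly(n)$. Then I enumerate over all pairs $(Q,k)$ with $Q\subseteq T$ and $k\in\{1,\ldots,n\}$; there are at most $2^s\cdot n = n^{\poly(1/\varepsilon)}$ such pairs, which gives the claimed running time and sets $t$. For each pair I output the estimate
\[
    \estMarginal{v}^{(Q,k)} \;:=\; \frac{k}{|Q|}\,\bigl|N(v)\cap Q\bigr|
\]
(and $0$ if $Q=\varnothing$). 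Conditioning on the ``correct'' guess $(Q,k)=(T\cap K,|K|)$, the set $Q$ is a uniform sample of $K$ of size $|Q|=\Theta(\varepsilon s)$, so $|N(v)\cap Q|$ is a sum of $|Q|$ Bernoulli indicators with mean $\marginal{v}{K}/|K|$ each (sampling without replacement only improves concentration via Hoeffding--Serfling).

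For each fixed $v$ one then splits into two cases. If $\marginal{v}{K}\ge \varepsilon(1+\varepsilon)\sigma n/9$, a multiplicative Chernoff bound with $|Q|=\Omega(\log n/\varepsilon^{O(1)})$ gives
\[
    (1-\varepsilon)\,\marginal{v}{K} \;\le\; \estMarginal{v}^{(Q,k)} \;\le\; (1+\varepsilon)\,\marginal{v}{K}
\]
with probability at least $1-1/(2n^3)$, which is item~\eqref{item:sampler2}. If $\marginal{v}{K}<\varepsilon(1+\varepsilon)\sigma n/9$, the mean $k\cdot\marginal{v}{K}/|K|$ of $\estMarginal{v}^{(Q,k)}$ is below $\sigma n$ by a factor $\varepsilon/9$, and an upper-tail Chernoff bound keeps the estimate below $(1+\varepsilon)\sigma n$ with probability at least $1-1/(2n^3)$, giving item~\eqref{item:sampler1}. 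A union bound over the $n$ vertices $v$ and over the good event for $|T\cap K|$ yields the overall probability $1-1/\poly(n)$.

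The main obstacle is calibrating the sample size $s$ so that three different demands are met simultaneously: the enumeration remains $n^{\poly(1/\varepsilon)}$, the conditional sample $Q=T\cap K$ is large enough for the $(1\pm\varepsilon)$ multiplicative concentration at threshold $\Theta(\varepsilon\sigma n)$ (which shrinks polynomially in $1/\varepsilon$ since $\sigma=\varepsilon^{6}$), and the failure probability is small enough for the union bound over $n$ vertices and all guesses. This balance is exactly the one worked out in~\cite{daskalakis2011oblivious} and~\cite{deo2024supermodular}, so the cleanest route is to invoke their construction verbatim and verify that the threshold $\varepsilon(1+\varepsilon)\sigma n/9$ used in items~\eqref{item:sampler1}--\eqref{item:sampler2} is consistent with their guarantees.
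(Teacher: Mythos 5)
The paper imports Lemma~\ref{lem:sampler} from \cite{deo2024supermodular,daskalakis2011oblivious} and gives no in-paper proof, so there is nothing internal to compare against; your reconstruction (draw a $\poly(1/\varepsilon)\log n$-sized sample $T$, exhaustively guess $(T\cap K,|K|)$, rescale $|N(v)\cap Q|$ by $k/|Q|$, and apply Chernoff/Hoeffding--Serfling for the correct guess) is exactly the oblivious-sampling construction those references use, and your parameter bookkeeping is consistent with $\sigma=\varepsilon^6$ and the threshold $\varepsilon(1+\varepsilon)\sigma n/9$.
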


We remark that Lemma~\ref{lem:sampler} allows us to set the LP~\eqref{lp_intro:objective}-\eqref{lp_intro:bounds} even without knowing the set $S'$. Note that by Lemma~\ref{lem:structure_core}~\eqref{item:structure_core2} we have $|S'|\geq \varepsilon n /6$, and so we can apply Lemma~\ref{lem:sampler}  for $K$ equal to $S'$. Then we run the algorithm outlined in Theorem~\ref{thm:main_supermodular2} and Theorem~\ref{thm:main_supermodular3} independently for every $j\in [t]$ by setting $\hat{d}_v:=\estMarginal{v}^j$ and then selecting  $H:=\{ i\in V \setminus C\,:\, \hat{d}_i\geq \sigma n (1-\varepsilon)\}$. Let us argue that with high probability at least for one $j\in [t]$ the selected above $\hat{d}_i$, $i\in H$ and the set $H$ are correct. Lemma~\ref{lem:sampler} guarantees that with probability $1-1/\poly(n)$ there exists $j\in [t]$ such that both~\eqref{item:sampler1} and~\eqref{item:sampler2} are satisfied. Conditioned on this, by~\eqref{item:sampler1} and~\eqref{item:sampler2} with probability at least $1-1/(2n^2)$ we have \[\{i\in S'\setminus C\,:\, \deg_{S'}(i)\geq \sigma\cdot n\}\subseteq H\subseteq \{i\in V\setminus C\,:\, \deg_{S'}(i)\geq (1-\varepsilon)\sigma\cdot n/(1+\varepsilon)\}\]
and for all $i\in H$ we have $(1-\varepsilon)\deg_{S'}(i) \leq \hat d_i\leq (1+\varepsilon)\deg_{S'}(i)$.

\subsubsection*{Acknowledgements.}
Research of Kanstantsin Pashkovich
was supported in part by  Discovery Grants Program RGPIN-2020-04346, Natural Sciences and Engineering Research
Council (NSERC) of Canada.
\newpage

\bibliographystyle{splncs04}
\bibliography{references}

\begin{thebibliography}{10}
\providecommand{\url}[1]{\texttt{#1}}
\providecommand{\urlprefix}{URL }
\providecommand{\doi}[1]{https://doi.org/#1}

\bibitem{alon2022bayesian}
Alon, T., Duetting, P., Li, Y., Talgam-Cohen, I.: Bayesian analysis of linear
  contracts. In: Proceedings of the 24th ACM Conference on Economics and
  Computation. pp. 66--66 (2023)

\bibitem{alon2021contracts}
Alon, T., D{\"u}tting, P., Talgam-Cohen, I.: Contracts with private cost per
  unit-of-effort. In: Proceedings of the 22nd ACM Conference on Economics and
  Computation. pp. 52--69 (2021)

\bibitem{alon2024incomplete}
Alon, T., Talgam-Cohen, I., Lavi, R., Shamash, E.: Incomplete information vcg
  contracts for common agency. Operations Research  \textbf{72}(1),  288--299
  (2024)

\bibitem{ananthakrishnan2024delegating}
Ananthakrishnan, N., Bates, S., Jordan, M., Haghtalab, N.: Delegating data
  collection in decentralized machine learning. In: International Conference on
  Artificial Intelligence and Statistics. pp. 478--486. PMLR (2024)

\bibitem{arora1995polynomial}
Arora, S., Karger, D., Karpinski, M.: Polynomial time approximation schemes for
  dense instances of np-hard problems. In: Proceedings of the twenty-seventh
  annual ACM symposium on Theory of computing. pp. 284--293 (1995)

\bibitem{babaioff2006combinatorial}
Babaioff, M., Feldman, M., Nisan, N.: Combinatorial agency. In: Proceedings of
  the 7th ACM Conference on Electronic Commerce. pp. 18--28 (2006)

\bibitem{babichenko2022information}
Babichenko, Y., Talgam-Cohen, I., Xu, H., Zabarnyi, K.: Information design in
  the principal-agent problem. arXiv preprint arXiv:2209.13688  (2022)

\bibitem{barman2015approximating}
Barman, S.: Approximating nash equilibria and dense bipartite subgraphs via an
  approximate version of caratheodory's theorem. In: Proceedings of the
  forty-seventh annual ACM symposium on Theory of computing. pp. 361--369
  (2015)

\bibitem{bernasconi2024agent}
Bernasconi, M., Castiglioni, M., Celli, A.: Agent-designed contracts: How to
  sell hidden actions. arXiv preprint arXiv:2402.16547  (2024)

\bibitem{castiglioni2024reduction}
Castiglioni, M., Chen, J., Li, M., Xu, H., Zuo, S.: A reduction from
  multi-parameter to single-parameter bayesian contract design (2024)

\bibitem{castiglioni2021bayesian}
Castiglioni, M., Marchesi, A., Gatti, N.: Bayesian agency: Linear versus
  tractable contracts. In: Proceedings of the 22nd ACM Conference on Economics
  and Computation. pp. 285--286 (2021)

\bibitem{castiglioni2022designing}
Castiglioni, M., Marchesi, A., Gatti, N.: Designing menus of contracts
  efficiently: the power of randomization. In: Proceedings of the 23rd ACM
  Conference on Economics and Computation. pp. 705--735 (2022)

\bibitem{castiglioni2023multi}
Castiglioni, M., Marchesi, A., Gatti, N.: Multi-agent contract design: How to
  commission multiple agents with individual outcomes. In: Proceedings of the
  24th ACM Conference on Economics and Computation. pp. 412--448 (2023)

\bibitem{chen2022selling}
Chen, J., Li, M., Xu, H.: Selling data to a machine learner: Pricing via costly
  signaling. In: International Conference on Machine Learning. pp. 3336--3359.
  PMLR (2022)

\bibitem{dasaratha2023equity}
Dasaratha, K., Golub, B., Shah, A.: Equity pay in networked teams. In:
  Proceedings of the 24th ACM Conference on Economics and Computation. pp.
  512--512 (2023)

\bibitem{daskalakis2011oblivious}
Daskalakis, C., Papadimitriou, C.H.: On oblivious ptas's for nash equilibrium.
  In: Proceedings of the forty-first annual ACM symposium on Theory of
  computing. pp. 75--84 (2009)

\bibitem{deo2024supermodular}
Deo-Campo~Vuong, R., Dughmi, S., Patel, N., Prasad, A.: On supermodular
  contracts and dense subgraphs. In: Proceedings of the 2024 Annual ACM-SIAM
  Symposium on Discrete Algorithms (SODA). pp. 109--132. SIAM (2024)

\bibitem{duetting2024multi}
Duetting, P., Ezra, T., Feldman, M., Kesselheim, T.: Multi-agent combinatorial
  contracts. arXiv preprint arXiv:2405.08260  (2024)

\bibitem{dutting2022combinatorial}
D{\"u}tting, P., Ezra, T., Feldman, M., Kesselheim, T.: Combinatorial
  contracts. In: 2021 IEEE 62nd Annual Symposium on Foundations of Computer
  Science (FOCS). pp. 815--826. IEEE (2022)

\bibitem{dutting2023multi}
D{\"u}tting, P., Ezra, T., Feldman, M., Kesselheim, T.: Multi-agent contracts.
  In: Proceedings of the 55th Annual ACM Symposium on Theory of Computing. pp.
  1311--1324 (2023)

\bibitem{dutting2024combinatorial}
Dutting, P., Feldman, M., Gal~Tzur, Y.: Combinatorial contracts beyond gross
  substitutes. In: Proceedings of the 2024 Annual ACM-SIAM Symposium on
  Discrete Algorithms (SODA). pp. 92--108. SIAM (2024)

\bibitem{dutting2023ambiguous}
D{\"u}tting, P., Feldman, M., Peretz, D.: Ambiguous contracts. arXiv preprint
  arXiv:2302.07621  (2023)

\bibitem{dutting2019simple}
D{\"u}tting, P., Roughgarden, T., Talgam-Cohen, I.: Simple versus optimal
  contracts. In: Proceedings of the 2019 ACM Conference on Economics and
  Computation. pp. 369--387 (2019)

\bibitem{dutting2021complexity}
Dutting, P., Roughgarden, T., Talgam-Cohen, I.: The complexity of contracts.
  SIAM Journal on Computing  \textbf{50}(1),  211--254 (2021)

\bibitem{dutting2024query}
Dütting, P., Feldman, M., Gal-Tzur, Y., Rubinstein, A.: The query complexity
  of contracts (2024)

\bibitem{emek2012computing}
Emek, Y., Feldman, M.: Computing optimal contracts in combinatorial agencies.
  Theoretical Computer Science  \textbf{452},  56--74 (2012)

\bibitem{ezra2024approximability}
Ezra, T., Feldman, M., Schlesinger, M.: On the (in) approximability of
  combinatorial contracts. In: 15th Innovations in Theoretical Computer Science
  Conference (ITCS 2024). Schloss-Dagstuhl-Leibniz Zentrum f{\"u}r Informatik
  (2024)

\bibitem{ezra2024sequential}
Ezra, T., Feldman, M., Schlesinger, M.: Sequential contracts (2024)

\bibitem{ezra2024contracts}
Ezra, T., Leonardi, S., Russo, M.: Contracts with inspections. arXiv preprint
  arXiv:2402.16553  (2024)

\bibitem{Grossman1992}
Grossman, S.J., Hart, O.D.: An Analysis of the Principal-Agent Problem, pp.
  302--340. Springer Netherlands, Dordrecht (1992).
  \doi{10.1007/978-94-015-7957-5_16},
  \url{https://doi.org/10.1007/978-94-015-7957-5_16}

\bibitem{guruganesh2024contracting}
Guruganesh, G., Kolumbus, Y., Schneider, J., Talgam-Cohen, I.,
  Vlatakis-Gkaragkounis, E.V., Wang, J.R., Weinberg, S.M.: Contracting with a
  learning agent. arXiv preprint arXiv:2401.16198  (2024)

\bibitem{guruganesh2021contracts}
Guruganesh, G., Schneider, J., Wang, J.R.: Contracts under moral hazard and
  adverse selection. In: Proceedings of the 22nd ACM Conference on Economics
  and Computation. pp. 563--582 (2021)

\bibitem{guruganesh2034contracts}
Guruganesh, G., Schneider, J., Wang, J.R., Zhao, J.: The power of menus in
  contract design. In: Proceedings of the 24nd ACM Conference on Economics and
  Computation (2023)

\bibitem{hoefer2024contract}
Hoefer, M., Schecker, C., Schewior, K.: Contract design for pandora's box
  (2024)

\bibitem{2ffe85e7-2b99-303d-a244-cf884ddb8386}
Holmström, B.: Moral hazard and observability. The Bell Journal of Economics
  \textbf{10}(1),  74--91 (1979), \url{http://www.jstor.org/stable/3003320}

\bibitem{saig2024delegated}
Saig, E., Talgam-Cohen, I., Rosenfeld, N.: Delegated classification. Advances
  in Neural Information Processing Systems  \textbf{36} (2024)

\bibitem{nobel-contract}
of~Sciences, R.S.A.: Scientific background on the 2016 nobel prize in economic
  sciences  (2016)

\bibitem{weitzman1978optimal}
Weitzman, M.: Optimal search for the best alternative, vol.~78. Department of
  Energy (1978)

\bibitem{xiao2020optimal}
Xiao, S., Wang, Z., Chen, M., Tang, P., Yang, X.: Optimal common contract with
  heterogeneous agents. In: Proceedings of the AAAI Conference on Artificial
  Intelligence. vol.~34, pp. 7309--7316 (2020)

\end{thebibliography}

\newpage

\appendix

\section{Appendix: Omitted Proofs}

\subsection{Linear Relaxation and Rounding}\label{sec:lp_rounding}

In this section, we assume that we have the information about $S'$ that is required to set up the LP~\eqref{lp_intro:objective}-\eqref{lp_intro:bounds}. With this assumption, we show how to find a near optimal solution. To do this we prove  Lemma~\ref{lem:main_supermodular1}, Theorem~\ref{thm:main_supermodular2} and Theorem~\ref{thm:main_supermodular3}.

\subsubsection{Proof of Lemma~\eqref{lem:main_supermodular1}}

    Let $S'$ be a set of agents satisfying Lemma~\ref{lem:structure_core}. Clearly, the characteristic vector $x'$ of the set $S'\cup C$ is a feasible solution for the LP~\eqref{lp_intro:objective}-\eqref{lp_intro:bounds}. For comparison's sake, we note that the bound of $\deg_{S'}(v)(\deg_{S'}(v)+1) \geq \funcPseudoCore \cdot c_v$ implies $\deg_{S'}(v)^2/c_v \geq \frac{(1-\varepsilon)\varepsilon^4}{36} \cdot n^2$. Moreover, the value of $x'$ equals 
\begin{equation}
\label{eq:desiredSetFeasibility}
    \sum_{v\in A} \frac{c_v}{\hat d_v}  \cdot x'_v\leq  \sum_{v\in A}\frac{c_v}{(1-\varepsilon)\deg_{S'}(v)}\leq \sum_{v\in S'}\frac{c_v}{(1-\varepsilon)\deg_{S'}(v)}=\frac{1}{1-\varepsilon} \left(1- L(S')\right)\,,
\end{equation}
showing the desired statement.

\subsubsection{Proof of Theorem~\ref{thm:main_supermodular2} }

In this section, we design an efficient preprocessing procedure, i.e. Algorithm~\ref{alg:fractionalCoring},  for feasible solutions of  the LP~\eqref{lp_intro:objective}-\eqref{lp_intro:bounds}. We apply this procedure to an optimal solution of the LP~\eqref{lp_intro:objective}-\eqref{lp_intro:bounds}. The following lemma directly implies Theorem~\ref{thm:main_supermodular2}. 

Given a vector $x$, Algorithm~\ref{alg:fractionalCoring} iteratively sets to zero entries $x_v$ which correspond to the inequalities in the LP~\eqref{lp_intro:objective}-\eqref{lp_intro:bounds} without sufficient concentration guarantee. Intuitively, in the proof of  Lemma~\ref{lem:fractionalCoring} we show that Algorithm~\ref{alg:fractionalCoring} does not set too many coordinates to zero. In particular, Algorithm~\ref{alg:fractionalCoring} consists of two loops. The first loops "zeroes out" in $x$ coordinates that sum up to at most $n^{1/2}$. Later, we show that $\funcConcentrationPlus  \ln(n)$ is a rough upper bound on the sum of coordinates "zeroed" in the second loop. These two upper bounds allow us to obtained the desired statement.

\begin{algorithm}[h!]
    \caption{Fractional Coring}\label{alg:fractionalCoring}
    \SetKwFunction{FractionalCoring}{FractionalCoring}
    \Input{Feasible solution $x$ to LP~\eqref{lp_intro:objective}-\eqref{lp_intro:bounds}}
    $x^* \gets x,
     j \gets 1$ \;
     \For{$v \in A$}{\label{line:removingA}
         \uIf{$x_v^* \leq 1/n^{1/2}$}{
            $x_v^* \gets 0$ \;
            }
         }
    \While{true}{\label{line:fractional_coring_while_loop}
        $\displaystyle v \gets \argmin_{\substack{v \in B \\ x^*_v > 0}} \left( \sum_{u \in N(v)} x^*_u \right)$  \;
        $\displaystyle w \gets \argmin_{\substack{w \in B \\ x^*_w > 0}} \left( \varepsilon d_w  x^*_w - \sum_{i=1}^{j-1} x_{v_i} \right)$ \;
        \uIf{$\displaystyle \left( \sum_{u \in N(v)} x^*_u \right) < \funcConcentration$} {
            $x^*_v \gets 0,
             v_j \gets v,
             j \gets j+1$ \;
        }
        \uElseIf{$\displaystyle \left( \varepsilon d_w  x^*_w - \sum_{i=1}^{j-1} x_{v_i} -n^{1/2}\right) < 0$\label{line:fractionalcoring}} {
                $x^*_w \gets 0,
                 v_j \gets w,
                 j \gets j+1$ \; 
            }
        \uElse{
            Exit loop \;
        }
    }  
    \Output{$x^*$}
\end{algorithm}
\begin{lemma}\label{lem:fractionalCoring}
    Given an optimal solution $x$ for the LP~\eqref{lp_intro:objective}-\eqref{lp_intro:bounds}, Algorithm~\ref{alg:fractionalCoring} outputs a vector $x^*$ such that~\eqref{lp_relax:nonzero_a}-\eqref{lp_relax:bounds} hold.
\end{lemma}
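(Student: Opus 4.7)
The argument splits into three stages: dispatching the easy structural properties, bounding the total LP mass zeroed by Algorithm~\ref{alg:fractionalCoring}, and using those bounds to verify the remaining near-feasibility constraints. Several properties hold by construction and can be dispatched immediately. Constraint~\eqref{lp_relax:nonzero_a} is enforced by the initial \textbf{for} loop. Since $x^*$ starts as $x$ and is only ever decreased --- and never at indices in $C$ or $D$ --- the fixed-value constraints~\eqref{lp_relax:cheap}, \eqref{lp_relax:expensive} and the box constraints~\eqref{lp_relax:bounds} are preserved; the non-negative LP objective only decreases, yielding~\eqref{lp_relax:objective}; and~\eqref{lp_relax:nonzero} is precisely the negation of the first-branch exit condition of the \textbf{while} loop. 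This leaves only~\eqref{lp_relax:edges}, \eqref{lp_relax:high_degree}, and~\eqref{lp_relax:low_degree}.

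The heart of the proof is bounding the total LP mass zeroed. Let $Z_A$ denote the vertices zeroed by the \textbf{for} loop, and let $v_1, \ldots, v_J$ be the vertices zeroed by the \textbf{while} loop in order, with $T_j := \sum_{i=1}^{j} x_{v_i}$. Trivially $\sum_{v \in Z_A} x_v \le |Z_A| \cdot n^{-1/2} \le n^{1/2}$. For the \textbf{while} loop, the key claim is that in either branch the zeroed vertex $v_j$ obeys
\[
x_{v_j} \le \frac{C\,\bigl(\funcConcentration + n^{1/2} + T_{j-1}\bigr)}{d_{v_j}}
\]
for a constant $C = C(\varepsilon)$. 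The second branch supplies this directly from its triggering inequality $\varepsilon d_{v_j} x_{v_j} < T_{j-1} + n^{1/2}$. For the first branch, I combine the LP constraint~\eqref{lp_intro:low_degree} applied to the \emph{original} $x$ with the identity $\sum_{u \in N(v_j)}(x_u - x^*_u) \le n^{1/2} + T_{j-1}$ (total mass zeroed so far). Setting $A := C(\funcConcentration + n^{1/2})$, this yields the recursion $T_j + A \le (1 + C/d_{v_j})(T_{j-1} + A)$, which telescopes to $T_J \le A \cdot \bigl(\exp(C J / \min_j d_{v_j}) - 1\bigr)$. Using the pseudo-core bound $d_{v_j} \ge \delta^{-1} n/\ln\ln n$ from Lemma~\ref{lem:structure_core}\eqref{item:structure_core3a} and $J \le n$, the exponent is $O(\ln\ln n)$ and hence $T_J = O(A \cdot \mathrm{polylog}(n))$, matching the advertised $\funcConcentrationPlus \ln n$ bound up to constants.

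With the mass bound in hand, the remaining constraints are verified via the rewrite $\sum_{u \in N(v)} x^*_u \ge \sum_{u \in N(v)} x_u - (n^{1/2} + T_J)$ combined with LP feasibility of $x$. For~\eqref{lp_relax:high_degree}, the subtracted $n^{1/2} + T_J$ is $o(\hat d_v)$ for $v \in A$ since $\hat d_v \ge (1-\varepsilon)\delta^{-1} n/\ln\ln n$, valid once $n$ is large enough, so the factor degrades only from $1/(1+\varepsilon)$ to at most $(1-\varepsilon)^2/(1+\varepsilon)$. For~\eqref{lp_relax:low_degree}, combining the rewrite with the second-branch exit condition $\varepsilon d_v x^*_v \ge T_J + n^{1/2}$ for each surviving $v \in B$ actually yields the stronger factor $(1-\varepsilon(1+\varepsilon))/(1+\varepsilon)$, which dominates the required $(1-2\varepsilon(1+\varepsilon))/(1+\varepsilon)$. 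For~\eqref{lp_relax:edges}, the bound $\sum_{v \in Z_A} \hat d_v x_v \le n^{3/2}$ combined with $|E(S')| = \Omega(\varepsilon n^2)$ --- which may be assumed, since otherwise $OPT \le 2\varepsilon$ and the empty set already achieves the desired $O(\sqrt\varepsilon)$-approximation --- gives the factor $(1-\varepsilon)^2$ for $n$ large enough.

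The main obstacle is the recursive mass bound in the \textbf{while} loop. The algorithm is engineered so that both branches deliver the same $(A + T_{j-1})/d_{v_j}$-style inequality, but verifying this for the first branch requires the nontrivial observation that although $\sum_{u\in N(v_j)} x^*_u < \funcConcentration$ does not directly bound $x_{v_j}$, applying the LP constraint to the \emph{original} $x$ does, once the cumulative removed mass is absorbed into the inequality. The geometric telescoping also crucially relies on the pseudo-core lower bound $d_v = \Omega(n/\ln\ln n)$; a weaker bound such as $\Omega(\sqrt n)$ would push the exponent in the recursion up to $n^{\Omega(1)}$ and collapse the entire argument.
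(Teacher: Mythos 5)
Your proposal is correct and follows essentially the same route as the paper's proof: first dispatch the structurally-enforced constraints, then bound the total mass zeroed in the while loop via a geometric recursion whose growth rate is controlled by the $\Omega(n/\ln\ln n)$ lower bound on $d_v$ (so the exponent is $O(\ln\ln n)$ and the bound is $O\left(\left(\bar\kappa + n^{1/2}\right)\mathrm{polylog}\,n\right)$), and finally use that mass bound to verify~\eqref{lp_relax:edges},~\eqref{lp_relax:high_degree},~\eqref{lp_relax:low_degree}. The key move you flag --- applying the LP constraint~\eqref{lp_intro:low_degree} to the \emph{original} $x$ (rather than the evolving $x^*$) and absorbing the cumulative removed mass into the inequality --- is exactly the paper's combination of its facts (ii) and (iii); your affine recursion $T_j + A \le (1 + C/d_{v_j})(T_{j-1}+A)$ is an equivalent repackaging of the paper's explicit induction $x_{v_j} \le \left(\bar\kappa + n^{1/2}\right)\bar{\delta_\varepsilon}\left(\bar{\delta_\varepsilon}+1\right)^{j-1}$.
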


\begin{proof}

We first consider the changes caused by the loop on line~\ref{line:removingA}. Let $X$ be the set of $v$ such that $x_v^*$ was set to 0 in this loop. Observe that we have $\sum_{v \in X} x_v \leq n^{1/2}$. This implies
\[ \sum_{u \in N(v)} x_u \leq \sum_{u \in N(v)} x^*_u + n^{1/2}\qquad \text{ for all } v\in A \cup B \quad  \]
at the time point when the algorithm moves to line~\ref{line:fractional_coring_while_loop}. Thus, we obtain
\begin{equation*}
    \sum_{u \in N(v)} x^*_u \geq \left(\frac{1-\varepsilon}{1+\varepsilon}\right) \cdot \hat d_v \qquad \text{ for all } v\in A \quad
\end{equation*}
at this point of the algorithm. Similarly, notice that at this point we have
\[
\sum_{v \in A} \hat d_v x_v - \sum_{v \in A} \hat d_v x_v^* \leq n^{3/2} \,,
\]
which implies that~\eqref{lp_relax:edges} is true at this point. Now we consider how the rest of the algorithm changes $x^*$ from this point onwards, i.e. from the moment when the algorithm moves to line~\ref{line:fractional_coring_while_loop} for the first time.

Let $J$ be the  value of $j$ immediately before Algorithm~\ref{alg:fractionalCoring} terminated. 
    Observe that at the end of the algorithm we have $\sum_{u \in N(v)} x^*_u \geq \funcConcentration$ for all $v \in B$, $x^*_v > 0$, implying that~\eqref{lp_relax:nonzero} holds for $x^*$. Note,  the entries of $x^*$ indexed by $v\in A\cup C$  did not change after the first loop of the algorithm, implying that~\eqref{lp_relax:nonzero_a},~\eqref{lp_relax:objective},~\eqref{lp_relax:edges} and~\eqref{lp_relax:cheap} hold for $x^*$, and we have $x^*_v\geq 1/n^{1/2}$ or $x^*_v=0$ for all $v\in A$. By line~\ref{line:fractionalcoring} in Algorithm~\ref{alg:fractionalCoring}, for all $v\in B$ with $x^*_v>0$ we have
    \begin{equation*}
        \varepsilon \cdot d_v \cdot x^*_v > \sum_{i=1}^J x_{v_i}+n^{1/2}\,.
    \end{equation*}
    So $x^*$ satisfies also~\eqref{lp_relax:low_degree}. Clearly, the output $x^*$  satisfies~\eqref{lp_relax:expensive} and~\eqref{lp_relax:bounds}. In the remaining part of the proof we show that $x^*$ satisfies also~\eqref{lp_relax:high_degree}.

 Consider the $j$th iteration, i.e. the iteration right before $x^*_{v_j}$ gets set to 0. For compactness in this proof, we introduce the following notation
 \begin{gather*}
     \bar{\kappa}:= \funcConcentration, \quad
      \bar{\delta} := \funcInverseDegreeBound, \quad
     \bar{\delta_\varepsilon} :=  \bar{\delta} / \varepsilon \,.
 \end{gather*}
 We have the following facts:
\begin{enumerate}[(i)]
    \item \label{fractionalcoring_item1} $\displaystyle \sum_{u \in N(v_j)} x^*_u < \bar{\kappa}$, \quad or \quad
    $\displaystyle x^*_{v_j} / \bar{\delta_\varepsilon} \leq \varepsilon d_{v_j} x^*_{v_j} \leq \sum_{i=1}^{j-1}  x_{v_i} + n^{1/2}$
    \item \label{fractionalcoring_item2}  $\displaystyle \sum_{u \in N(v_j)} x_u \leq \sum_{u \in N(v_j)} x^*_u + \sum_{i=1}^{j-1} x_{v_i}  + n^{1/2}$ 
    \item \label{fractionalcoring_item3}  $\displaystyle x_{v_j} / \bar{\delta} \leq \sum_{u \in N(v_j)} x_u$.
\end{enumerate}
Here,~\eqref{fractionalcoring_item1} follows from the fact that $x^*_{v_j}$ is set to $0$ in the current iteration. Due to the definition of $v_1$, \ldots, $v_{j-1}$  we have that $ \sum_{u \in V} x_{u} - \sum_{u \in V} x^*_u $ is at most $\sum_{i=1}^{j-1}  x_{v_i} + n^{1/2}$, resulting in~\eqref{fractionalcoring_item2}. The inequality~\eqref{fractionalcoring_item3} holds since $x^*$ after the loop on line~\ref{line:removingA} is feasible for the LP~\eqref{lp_relax:objective}-\eqref{lp_relax:bounds}. 

We combine~\eqref{fractionalcoring_item1}, \eqref{fractionalcoring_item2} and \eqref{fractionalcoring_item3} to obtain that one of the following inequalities holds
\[
    x^*_{v_j} \leq \bar{\delta} \cdot\left(\bar{\kappa} + n^{1/2} + \sum_{i=1}^{j-1} x_{v_i} \right)
    \qquad\text{ or }\qquad
     x^*_{v_j} \leq  \bar{\delta_\varepsilon} \cdot \left(\sum_{i=1}^{j-1}  x_{v_i} + n^{1/2}\right)\,,
\]
depending on the outcome in~\eqref{fractionalcoring_item1}.

Let us prove that  
\begin{equation}\label{eq:fractionalcoring}
     x_{v_j} \leq \left(\bar{\kappa} + n^{1/2}\right) \bar{\delta_\varepsilon} \left( \bar{\delta_\varepsilon}+1 \right)^{j-1}
\end{equation}
holds for every $j=1,\ldots, J-1$. We prove the inequality~\eqref{eq:fractionalcoring} by induction. It is straightforward to notice that 
$ x_{v_1} \leq \bar{\delta_\varepsilon} \cdot (\bar{\kappa} + n^{1/2})$. Let us assume that for all $1 \leq i \leq j$ the inequality~\eqref{eq:fractionalcoring} holds. 

\textbf{Case $\sum_{u \in N(v_{j+1})} x^*_u < \bar{\kappa}$.} In this case, we have
\begin{align*}
    & x^*_{v_{j+1}} / \bar{\delta}
    \leq \bar{\kappa} + n^{1/2} + \sum_{i=1}^{j} x_{v_i}
    \leq {\left(\bar{\kappa} + n^{1/2}\right)} + \sum_{i=1}^{j}{\left(\bar{\kappa} + n^{1/2}\right)} \bar{\delta_\varepsilon} \left( \bar{\delta_\varepsilon}+1 \right)^{i-1} \\
    =& {\left(\bar{\kappa} + n^{1/2}\right)} + {\left(\bar{\kappa} + n^{1/2}\right)} \cdot  \bar{\delta_\varepsilon} \cdot \sum_{i=1}^{j} \left(\bar{\delta_\varepsilon}+1\right)^{i-1} \\
    =& {\left(\bar{\kappa} + n^{1/2}\right)} \left( 1 + \bar{\delta_\varepsilon} \cdot \frac{(\bar{\delta_\varepsilon}+1)^j - 1} {\bar{\delta_\varepsilon}} \right)
    = {\left(\bar{\kappa} + n^{1/2}\right)} \cdot \left( \bar{\delta_\varepsilon}+1 \right)^j\,,
\end{align*}
leading us to the inequality~\eqref{eq:fractionalcoring} by rearranging for $x^*_{v_{j+1}}$.  

\textbf{Case $x^*_{v_{j+1}} / \bar{\delta_\varepsilon} \leq \sum_{i=1}^j { x_{v_i} + n^{1/2}}$.} In this case, we have
\begin{align*}
    &x^*_{v_{j+1}}
    \leq \bar{\delta_\varepsilon}  \cdot {\left(\sum_{i=1}^{j}  x_{v_i} + n^{1/2}\right)}
    \leq  \bar{\delta_\varepsilon} 
    \cdot \left(\sum_{i=1}^{j} {\left(\bar{\kappa} + n^{1/2}\right)} \bar{\delta_\varepsilon}\left(\bar{\delta_\varepsilon}+1\right)^{i-1} + n^{1/2} \right) \\
    =&\, \bar{\delta_\varepsilon} \cdot \left({ {\left(\bar{\kappa} + n^{1/2}\right)}\left(\left( \bar{\delta_\varepsilon}+1\right)^j - 1 \right) + n^{1/2} }\right)
    \leq {\left(\bar{\kappa} + n^{1/2}\right)} \bar{\delta_\varepsilon} \left( \bar{\delta_\varepsilon}+1\right)^j\,,
\end{align*}
leading us again to the inequality~\eqref{eq:fractionalcoring}. Thus, the inequality~\eqref{eq:fractionalcoring} always holds.

By~\eqref{eq:fractionalcoring}, we have
\begin{align*}
    &\sum_{j=1}^J x_{v_j}\leq \sum_{j=1}^J {\left(\bar{\kappa} + n^{1/2}\right)}\bar{\delta_\varepsilon}\left(\bar{\delta_\varepsilon}+1\right)^{j-1}
    = {\left(\bar{\kappa} + n^{1/2}\right)} \left(\left(\bar{\delta_\varepsilon}+1\right)^J - 1\right) \\
    &\leq {\left(\bar{\kappa} + n^{1/2}\right)} \left(\bar{\delta_\varepsilon}+1\right)^J
    = {\left(\bar{\kappa} + n^{1/2}\right)} \left( \frac{36e}{(1 - \varepsilon)^2 \cdot \varepsilon^6 \ln(2)} \cdot \frac{\ln \ln(n)}{n} + 1 \right)^J \\
    &\leq {\left(\bar{\kappa} + n^{1/2}\right)} \cdot \exp\left( \frac{36e}{(1 - \varepsilon)^2 \cdot \varepsilon^6\ln(2)} \cdot\ln\ln\left(n\right) \right)
    = {\left(\bar{\kappa} + n^{1/2}\right)} \cdot \ln(n)^\alpha\,,
\end{align*}
where $\alpha := \left({36e}\right) / \left({(1 - \varepsilon)^2 \cdot \varepsilon^6\ln(2)}\right)$ and so $\alpha$ is an expression depending only on $\varepsilon$. 
Here, the first inequality is due to~\eqref{eq:fractionalcoring}. The third inequality is due to the definition of $e$. Thus, we have that for the output $x^*$ the value $ \sum_{u \in V} x_{u} - \sum_{u \in V} x^*_u $ is at most ${\left(\bar{\kappa} + n^{1/2}\right)} \cdot \ln(n)^\alpha$. Note that by definition for all $i\in A$ we have $\hat{d}_i\geq (1-\varepsilon)\sigma \cdot n$. Hence, $ \sum_{u \in V} x_{u} - \sum_{u \in V} x^*_u \leq {\left(\bar{\kappa} + n^{1/2}\right)} \cdot \ln(n)^\alpha$  imply that $x^*$ satisfies also~\eqref{lp_relax:high_degree}.
\end{proof}

\subsubsection{Proof of Theorem~\ref{thm:main_supermodular3} }

Let us assume that vector $x^*$ satisfies~\eqref{lp_relax:nonzero_a}-\eqref{lp_relax:bounds}. Let us construct a set $S''$ by independently including each $v\in A\cup B$ into the set $S''$ with probability $x^*_v$. Afterwards, we include each $v\in C$ into the set $S''$ if $N(v)\cap S''\neq \varnothing$. This is precisely the randomized construction of $S''$ described in Theorem~\ref{thm:main_supermodular3}.

\begin{lemma}\label{lem:good_events}
    Let $S''$ be constructed as in Theorem~\ref{thm:main_supermodular3}. Then we have \begin{equation}\label{item:good_events_edges}
    \Pr\left[\sum_{v\in A\cap S''} \hat d_v  \geq (1-\varepsilon) \sum_{v \in A} \hat d_v x^*_v\right]\geq 1-\frac{1}{n^2}
    \end{equation}
    \begin{equation}\label{item:good_events_costs}
    \Pr\left[\sum_{v \in S'' \cap A} \frac{c_v}{\hat d_v} \leq \sum_{v \in A} \frac{c_v}{\hat d_v}x^*_v + \varepsilon\right]\geq {1-\frac{1}{n^2}}
    \end{equation}
    and for all $v\in A\cup B$ we have 
    \begin{equation}\label{item:good_events_degree}
        \Pr\left[\deg_{S''}(v) \geq (1-\varepsilon) \sum_{u \in N(v)} x^*_u\,\mid\, v\in S''\right]\geq {1-\frac{1}{n^2}}\,.
    \end{equation}
    
\end{lemma}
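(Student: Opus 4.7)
The plan is to establish each of the three claims via concentration inequalities applied to sums of independent Bernoulli random variables, leveraging the structural guarantees on $x^*$ from Theorem~\ref{thm:main_supermodular2} and on $S'$ from Lemma~\ref{lem:structure_core}. The common observation is that the indicators $\mathbb{1}[v\in S'']$ for $v\in A\cup B$ are mutually independent Bernoulli variables with parameters $x^*_v$, while every $C$-node is included deterministically as a function of its neighbors' status.

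For~\eqref{item:good_events_edges}, I would set $Y := \sum_{v\in A} \hat d_v \mathbb{1}[v\in S'']$, a sum of independent $[0,(1+\varepsilon)n]$-valued summands with expectation $\mu = \sum_{v\in A} \hat d_v x^*_v$. By property~\eqref{item:structure_core2} of Lemma~\ref{lem:structure_core}, $S'$ contains at least $\varepsilon n/6$ vertices each of $S'$-degree at least $\varepsilon n/6$, so $|E(S')| = \Omega(\varepsilon^2 n^2)$; combined with~\eqref{lp_relax:edges} this forces $\mu = \Omega(n^2)$. A multiplicative Chernoff bound with ratio $(1-\varepsilon)$ then yields a failure probability of $\exp(-\Omega(\varepsilon^2 \mu/n)) = \exp(-\Omega(n))$, which is below $1/n^2$ for large enough $n$.

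For~\eqref{item:good_events_costs}, I would set $Z := \sum_{v\in A} (c_v/\hat d_v) \mathbb{1}[v\in S'']$. The key quantitative claim is that membership of $v$ in $A$, via $\hat d_v \geq (1-\varepsilon) d_v$ together with the formula for $d_v$, forces $c_v/\hat d_v = O(1/n)$ with constants depending only on $\varepsilon$; verifying this requires a small case split on whether $d_v$ is governed by the $n/\ln\ln n$ floor or by $\sqrt{c_v \cdot \gamma(n^2+6n/\varepsilon) + 1/4} - 1/2$. Given this bound, $\sum_{v\in A} (c_v/\hat d_v)^2 = O(1/n)$, and Hoeffding's inequality applied with additive deviation $\varepsilon$ yields $\Pr[Z \geq \mathbb{E}[Z] + \varepsilon] \leq \exp(-\Omega(\varepsilon^2 n)) \leq 1/n^2$.

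For~\eqref{item:good_events_degree}, I would condition on $v\in S''$ for $v\in A\cup B$ with $x^*_v>0$ (the remaining cases being vacuous), so that every $u\in C\cap N(v)$ is deterministically in $S''$, while the indicators $\mathbb{1}[u\in S'']$ for $u\in (A\cup B)\cap N(v)$ remain independent Bernoullis with parameters $x^*_u$. Writing $W$ for their sum and $K=|C\cap N(v)|$, we have $\deg_{S''}(v) = W+K$ and $\mathbb{E}[W]+K = \sum_{u\in N(v)} x^*_u$, so the failure event rewrites as $W < (1-\varepsilon)\mathbb{E}[W] - \varepsilon K$. If the right-hand side is negative the event is impossible; otherwise $\mathbb{E}[W] \geq \tfrac{\varepsilon}{1-\varepsilon} K$ and therefore $\mathbb{E}[W]$ inherits up to constants the lower bound on $\sum_u x^*_u$ from~\eqref{lp_relax:high_degree} (which is $\Omega(n)$ for $v\in A$) or from~\eqref{lp_relax:nonzero} (which is $\Omega(\kappa \sqrt n \ln(2n^2))$ for $v\in B$). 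A multiplicative Chernoff bound then closes the argument; the parameter $\kappa=2/\varepsilon^2$ is tuned precisely so that the $B$-case yields failure probability below $1/n^2$. I expect the main obstacle to be this last case analysis together with verifying the $c_v/\hat d_v = O(1/n)$ estimate in~\eqref{item:good_events_costs} --- both are purely calculation-level but easy to mishandle given the intricate definitions of $A$ and $d_v$.
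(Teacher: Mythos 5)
Your proposal is correct and matches the paper's strategy: all three bounds are obtained by applying Chernoff/Hoeffding to the same sums of independent indicators, using the structural guarantees~\eqref{lp_relax:nonzero_a}--\eqref{lp_relax:bounds} and Lemma~\ref{lem:structure_core}\eqref{item:structure_core2} exactly as the paper does. The only differences are cosmetic bookkeeping: for~\eqref{item:good_events_costs} the paper splits on $c_v\gtrless n^{1/4}$ to bound $\sum_{v}(c_v/\hat d_v)^2$ rather than proving $c_v/\hat d_v = O_\varepsilon(1/n)$ directly (your route is in fact slightly cleaner), and for~\eqref{item:good_events_degree} the paper folds the $C$-neighbors into the Chernoff sum as deterministic Bernoullis rather than peeling off $K$ and losing an extra factor of $\varepsilon$ in the exponent, but both give a bound below $1/n^2$.
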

\begin{proof}
    Let us use Chernoff's inequality to bound the probability that the event in~\eqref{item:good_events_edges} does not occur, and obtain
    \begin{align*}
        &\Pr\left[\sum_{v\in A\cap S''} \hat d_v  < (1-\varepsilon) \sum_{v \in A} \hat d_v x^*_v\right]
        = \Pr\left[\sum_{v\in A\cap S''}\hat{d}_v < (1-\varepsilon)\EE \left[\sum_{v\in A\cap S''}\hat{d}_v \right]\right] \\
        \leq&\exp\left( -\frac{2\varepsilon^2 \EE \left[ \sum_{v\in A\cap S''}\hat{d}_v \right]^2}{n\left(\max_{v\in A} \hat{d}_v \right)^2} \right)\leq_{\star} \exp\left(-\frac{2\varepsilon^2(1-\varepsilon)^4\left( \varepsilon/6 \cdot n\right)^4}{n^3}\right)\leq  \frac{1}{n^2}\,\\
    \end{align*}
    here the inequality $\star$ follows from $\hat{d}_v\leq n$ for all $v\in A$; and from the definition of $\hat{d}_v$ in Definition~\ref{def:lpParam} and the fact that the set $S'$ satisfies~\eqref{item:structure_core2} in Lemma~\ref{lem:structure_core}.

Now we  use Hoeffding's inequality to bound the probability the event in (\ref{item:good_events_costs}) does not occur. To use this inequality we first obtain some preliminary bounds. Recall by definition for all $v \in A$ we have $(\hat d_v/(1-\varepsilon)) \geq d_v$. Then if $c_v \geq n^{1/4}$, we can expand and rearrange the definition of $d_v$ to obtain 
\[
\frac{(\hat d_v)^2}{c_v} \geq (1-\varepsilon)^2 \cdot \left( \funcPseudoCore - \frac{\hat d_v}{(c_v(1-\varepsilon))}\right) \geq (1-\varepsilon)^3 \funcPseudoCore \,,
\] and otherwise $\hat d_v / c_v \geq (1-\varepsilon)\sigma n^{3/4}$. Finally we recall $x^*_v \geq 1/n^{1/2}$ for $v \in A, x^*_v > 0$, and observe
    \begin{equation*}
        \sum_{\substack{v \in A}} \frac{c_v}{n^{3/2}} \leq \sum_{v \in A} \frac{c_v}{\hat d_v} \cdot x_v^* \leq_{\star} \sum_{v \in V} \frac{c_v}{\hat d_v} x' < 1 \,,
    \end{equation*}
    where $x'$ corresponds to the set of agents satisfying Lemma~\ref{lem:structure_core}, and inequality $\star$ follows from the feasibility of the desired set, equation~\eqref{eq:desiredSetFeasibility}. With this, we obtain
    \begin{align*}
        &\Pr\left[\sum_{v \in S'' \cap A} \frac{c_v}{\hat d_v} > \sum_{v \in A} \frac{c_v}{\hat d_v}x^*_v + \varepsilon\right]
        = \Pr\left[\sum_{v \in S'' \cap A} \frac{c_v}{\hat d_v} > \EE\left[ \sum_{v \in S'' \cap A} \frac{c_v}{\hat d_v} \right] + \varepsilon\right] \\
        \leq&\exp\left( -\frac{2\varepsilon^2}{\sum_{v \in S'' \cap A} \left( \frac{c_v}{\hat d_v} \right)^2} \right) \\
        \leq&_{\star} \exp\left( -\frac{2\varepsilon^2}{\sum_{\substack{v \in S'' \cap A \\ c_v \leq n^{1/4}}} \left( \frac{1}{(1-\varepsilon)\sigma n^{3/4}} \right)^2 + \sum_{\substack{v \in S'' \cap A \\ c_v > n^{1/4}}} \left( \frac{c_v}{(1-\varepsilon)^3\funcPseudoCore} \right)} \right)\\
        \leq&_{\star\star} \exp\left( -\frac{2\varepsilon^2}{\frac{1}{(1-\varepsilon)^2\sigma^2 n^{1/2}} + \frac{n^{3/2}}{(1-\varepsilon)^3\funcPseudoCore} } \right)
        \leq \frac{1}{n^2}
    \end{align*}
    here the inequality $\star$ follows from the above discussion bounding costs, and the inequality $\star\star$ follows from noting $|A| \leq n$ and the fact that $\sum_{v \in S'' \cap A} c_v < n^{3/2}$. Also note $\funcPseudoCore$ is at least $\varepsilon^7 n^2$.

    Now we will again use Chernoff bounds for our final concentration result. This time we bound the probability event (\ref{item:good_events_degree}) does not occur, and obtain
    \begin{align*}
        &\Pr\left[\deg_{S''}(v) < (1-\varepsilon) \sum_{u \in N(v)} x^*_u\,\mid\, v\in S''\right] \\
        &\leq \exp\left( -\frac{\varepsilon^2 \EE \left[\deg_{S''}(v)\,\mid\, v\in S''\right]}{2} \right)
        \leq_{\star} \exp\left(-\frac{\varepsilon^2\funcConcentration}{2}\right)\leq  \frac{1}{n^2}\,\\
    \end{align*}
    where inequality $\star$ is from our guaranteed bound $\EE \left[\deg_{S''}(v) \,\mid\, v\in S''\right] \geq \funcConcentration$ for any $v$ with $x_v > 0$.
\end{proof}

\subsubsection{Bound on $R(S'')$.}
Let us use Lemma~\ref{lem:good_events} to prove Theorem~\ref{thm:main_supermodular3}.
First let us estimate $R(S'')$. Let us consider the case when all events described in~\eqref{item:good_events_edges} and~\eqref{item:good_events_degree} happen. Thus,  with probability at least ${1-2/n}$ we have
\begin{align*}
    &R(S'')\cdot\maxValue=|E(S'')| \geq \frac{1}{2} \sum_{v \in S'' \cap A} \deg_{S''}(v)
    \geq_{\star}\frac{1}{2} \cdot\frac{(1-\varepsilon)^2}{1+\varepsilon} \sum_{v \in S''\cap A}\hat{d}_v \\
    &\geq_{\star\star} \frac{1}{2} \cdot\frac{(1-\varepsilon)^3}{1+\varepsilon} \sum_{v \in  A}\hat{d}_v x^*_v 
    \geq \frac{ (1-\varepsilon)^5}{(1+\varepsilon)} |E( S')|=\frac{ (1-\varepsilon)^5}{(1+\varepsilon)} \cdot R(S')\cdot\maxValue\,,
\end{align*}
where the inequality $\star$ is due to the events in~\eqref{item:good_events_degree} and since $x^*$ satisfies~\eqref{lp_relax:high_degree}, the inequality $\star\star$ holds due to the event in~\eqref{item:good_events_edges}. The last inequality holds since $x^*$ satisfies~\eqref{lp_relax:edges}.

\subsubsection{Bound on $L(S'')$.}

Let us now estimate $L(S'')$. Recall that we have 
\begin{equation}\label{eq:left_side}
  L(S'')=  1 - \sum_{v \in S''\cap A} \frac{c_v}{\deg_{S''}(v)}- \sum_{v \in S''\cap B} \frac{c_v}{\deg_{S''}(v)}-  \sum_{v \in S''\cap C} \frac{c_v}{\deg_{S''}(v)}\,.
\end{equation}

By the definition of $C$ and by the construction of $S''$, we have that $\deg_{S''}(v)\geq 1$ and $c_v\leq \varepsilon/(2n)$ for every $v\in S''\cap C$. Thus, the last term in~\eqref{eq:left_side} is always at most $\varepsilon/2$, i.e.
\[\sum_{v \in S''\cap C} \frac{c_v}{\deg_{S''}(v)}\leq \varepsilon/2\,.\]

Let us consider the case when all events described in~\eqref{item:good_events_edges}, \eqref{item:good_events_costs} and~\eqref{item:good_events_degree} happen. With probability at least ${1-2/n}$ we have that the second term in~\eqref{eq:left_side} is at most $(OPT_{LP}+\varepsilon)(1+\varepsilon)/(1-\varepsilon)^3$, i.e.
\[
\sum_{v \in S''\cap A} \frac{c_v}{\deg_{S''}(v)}\leq \frac{(OPT_{LP}+\varepsilon)(1+\varepsilon)}{{ (1-\varepsilon)^3}}\,.
\]
Indeed, 
\begin{align*}
    \sum_{v \in S''\cap A} \frac{c_v}{\deg_{S''}(v)}
    &\leq_{\star} \sum_{v \in S''\cap A} \frac{c_v}{(1-\varepsilon)\sum_{u \in N(v)} x^*_u }\leq_{\star\star} (1+\varepsilon)\sum_{v \in S''\cap A} \frac{c_v}{{ (1-\varepsilon)^3}\hat{d}_v }\\
    &\leq_{\star\star\star} \frac{1+\varepsilon}{{ (1-\varepsilon)^3}}\left(\sum_{v \in A} \frac{c_v}{\hat d_v}x^*_v + \varepsilon\right)
    \leq \frac{1+\varepsilon}{{ (1-\varepsilon)^3}}\left(OPT_{LP}+\varepsilon\right)\,,
\end{align*}
where the inequality $\star$ is due to the events in~\eqref{item:good_events_degree}, the inequality $\star\star$ is due to~\eqref{lp_relax:high_degree} for $x^*$, the inequality $\star\star\star$ is due to the event in~\eqref{item:good_events_costs}.

 Now it remains to estimate the third term in~\eqref{eq:left_side}, i.e. the term corresponding to the set $B$. As in the previous case, we assume that all events in~\eqref{item:good_events_edges}, \eqref{item:good_events_costs} and~\eqref{item:good_events_degree} happen.  Let us start with an estimate for the expected value of the term corresponding to the set $B$.
 \begin{align*}
    &\EE\left[ \sum_{v \in S'' \cap B} \frac{c_v}{\deg_{S''}(v)} \right]
    = \sum_{v \in B}\EE\left[ \frac{c_v }{\deg_{S''}(v)}\,\mid\, v\in S''\right]\cdot \Pr[v\in S''] \\
    \leq&_{\star} \sum_{v \in B}\left( \frac{(1+\varepsilon)}{(1-2\varepsilon  (1+\varepsilon))}  \frac{c_v }{d_v\cdot x^*_v}\right) \cdot x^*_v
    = \frac{(1+\varepsilon)}{(1-2\varepsilon (1+\varepsilon))} \sum_{v \in B} \frac{c_v }{d_v} \\
    \leq& \frac{(1+\varepsilon)}{(1-2\varepsilon (1+\varepsilon))} \sum_{v \in B} \frac{d_v+1}{\funcPseudoCore} \\
    \leq& \frac{(1+\varepsilon)^2}{(1-2\varepsilon (1+\varepsilon))} \sum_{v \in B} \frac{\sigma n}{\funcPseudoCore} 
    \leq \varepsilon\,,
\end{align*}
where the  inequality $\star$ is due to~\eqref{lp_relax:low_degree} for $x^*$ and due to the event in~\eqref{item:good_events_degree}. The last two inequalities are due to $c_v\leq d_v(d_v+1)/(\funcPseudoCore)$ for all $v\in B$ and due to the definition of $\sigma, \gamma$. 

Using Markov's inequality, we get
\begin{align*}
    \Pr\left[\sum_{v \in S'' \cap B} \frac{c_v}{\deg_{S''}(v)}\geq \sqrt{\varepsilon}\right]\leq \sqrt{\varepsilon}\,.
\end{align*}

Hence, with probability { at least $1-\sqrt{\varepsilon} - 1/n$} for sufficiently small $\varepsilon$ we have
\begin{align*}
&L(S'')\geq 1-\frac{1+\varepsilon}{{ (1-\varepsilon)^3}}\left(OPT_{LP}+\varepsilon\right) -\sqrt{\varepsilon}-\varepsilon/2 \\
=& 1-\frac{1+\varepsilon}{{ (1-\varepsilon)^3}}\left(\frac{1}{1-\varepsilon} \left(1- L(S')\right)+\varepsilon\right) -2\sqrt{\varepsilon}
\geq L(S')-3\sqrt{\varepsilon}\,.
\end{align*}

\subsubsection{Bound on $g(S'')$.}
Thus, with probability { at least $1- \sqrt{\varepsilon} - 1/n$} we have
\begin{align*}
g(S'')
&=L(S'')\cdot R(S'')
\geq \left(L(S')-3\sqrt{\varepsilon}\right)\cdot\frac{{ (1-\varepsilon)^5}}{1+\varepsilon} R(S') \\
&\geq L(S')R(S')-4\sqrt{\varepsilon}=g(S')-4\sqrt{\varepsilon},\
\end{align*}
using that $L(S')$ and $R(S')$ are at most $1$. Thus we 
obtain the desired statement of Theorem~\ref{thm:main_supermodular3}.

\subsection{Structured Near Optimal Solutions}\label{sec:near_optimal_solution}

In this section we prove Lemma~\ref{lem:structure_core}. The desired set $S'$ is the output of Algorithm~\ref{alg:iteratedCoring} with input given by the optimal set $S^*$ and the original costs $c_i$, $i\in V$.

Before we proceed, let us make some observations about the structure of an optimal set $S^*$. The next lemma can be shown by starting with the set $S^*$  and iteratively removing all agents with at most $\varepsilon n/3$ adjacent agents in the set.
\begin{lemma}[\cite{deo2024supermodular}]\label{lem:optimalSize}
If a set $S$ satisfies $R(S) \geq \varepsilon$, then there exists $\widetilde{S}$, $\widetilde{S}\subseteq S$  such that $R(\widetilde S)\geq R(S^*)-2\varepsilon/3$, $|\widetilde{S}|\geq \frac{\varepsilon}{3} n$ and $\marginal{i}{\widetilde{S}} \geq \frac{\varepsilon}{3}n$ for all $i \in \widetilde{S}$.
\end{lemma}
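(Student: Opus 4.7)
The plan is to carry out a standard peeling/core-extraction procedure, exactly as the text before the lemma suggests: starting from $S$, repeatedly remove any vertex whose current degree within the set is below the threshold $\varepsilon n/3$, and take $\widetilde{S}$ to be the terminal set. Formally, set $S_0 := S$, and while there exists $v \in S_t$ with $\deg_{S_t}(v) < \varepsilon n/3$, let $S_{t+1} := S_t \setminus \{v\}$; otherwise halt and put $\widetilde{S} := S_t$. The minimum-degree property $\deg_{\widetilde{S}}(i) \geq \varepsilon n /3$ for all $i \in \widetilde{S}$ is then immediate from the stopping condition.

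The remaining two properties follow from counting edges lost along the process. Each removal destroys strictly fewer than $\varepsilon n/3$ edges, and at most $n$ removals occur, so the total number of removed edges is at most $\varepsilon n^2 / 3$. Dividing by $\binom{n}{2}$ gives the edge-density loss bound
\[
R(\widetilde{S}) \;\geq\; R(S) - \frac{\varepsilon n^2/3}{\binom{n}{2}} \;\geq\; R(S) - \frac{2\varepsilon}{3}\,,
\]
which (when the lemma is applied with $S = S^*$, as intended in later sections) yields $R(\widetilde{S}) \geq R(S^*) - 2\varepsilon/3$.

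For the size lower bound, I would argue by contradiction: suppose $|\widetilde{S}| < \varepsilon n / 3$. Then the number of edges remaining is at most $\binom{|\widetilde{S}|}{2} < \varepsilon^2 n^2 / 18$, and combining with the edge-loss bound above gives
\[
|E(S)| \;\leq\; |E(\widetilde{S})| + \frac{\varepsilon n^2}{3} \;<\; \frac{\varepsilon^2 n^2}{18} + \frac{\varepsilon n^2}{3}\,.
\]
Dividing by $\binom{n}{2}$ and using $\varepsilon \leq 1$ yields $R(S) < \varepsilon$, contradicting the hypothesis. Hence $|\widetilde{S}| \geq \varepsilon n / 3$.

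The only mildly delicate point is getting the constants to match cleanly with the $\binom{n}{2}$ normalization (the ratio $n^2/\binom{n}{2} = 2n/(n-1)$ versus the nominal factor of $2$); this can be absorbed either by assuming $n$ large enough in terms of $\varepsilon$ or by tightening the threshold slightly. Otherwise the argument is entirely routine and essentially identical to the core-extraction used in \cite{deo2024supermodular}, so I would cite their proof rather than repeating it in full.
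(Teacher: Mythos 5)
Your proof is correct and follows exactly the sketch the paper gives before the lemma ("iteratively removing all agents with at most $\varepsilon n/3$ adjacent agents in the set"); the paper itself only cites \cite{deo2024supermodular} and does not spell out the argument, so you are filling in the same peeling/edge-counting details that the cited source would. The small constant slack you flag (the $n^2$ vs.\ $\binom{n}{2}$ ratio of $2n/(n-1)$) is real but harmless, since the surrounding analysis is asymptotic in $n$ anyway; and the appearance of $S^*$ rather than $S$ in the lemma's conclusion is just the intended application (the lemma is invoked with $S = S^*$), which you correctly noted.
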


At several places, we need to bound the total cost of agents in the optimal set $S^*$ or its subsets. For this we use the following observation.

\begin{observation}\label{obs:averageCost}
If a set $S^*$ satisfies $g(S^*) \geq \varepsilon$, or just $L(S^*) > 0$, then for every $S \subseteq S^*$ we have $\sum_{i \in S} c_i \leq \maxMarginal$.
\end{observation}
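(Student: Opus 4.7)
The plan is to reduce both hypotheses to the single condition $L(S^*) > 0$ and then read the inequality off the definition of $L$. The first reduction is immediate: since $R(S^*) = |E(S^*)|/\binom{n}{2} \geq 0$ holds always, the product $g(S^*) = L(S^*) \cdot R(S^*) \geq \varepsilon > 0$ forces $L(S^*) > 0$. Hence in either case we may assume
\[
\sum_{i \in S^*} \frac{c_i}{\marginal{i}{S^*}} \;=\; 1 - L(S^*) \;<\; 1.
\]

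Next I would use the trivial degree bound $\marginal{i}{S^*} \leq n$ valid for every $i \in S^*$. A brief sanity check is in order: if some $i \in S^*$ with $c_i > 0$ had $\marginal{i}{S^*} = 0$, then the corresponding term above would be infinite, contradicting $L(S^*) > 0$; so every $i \in S^*$ contributing a nonzero term satisfies $1 \leq \marginal{i}{S^*} \leq n$. Therefore $c_i / n \leq c_i / \marginal{i}{S^*}$ for all $i \in S^*$, and summing gives
\[
\sum_{i \in S^*} c_i \;\leq\; n \sum_{i \in S^*} \frac{c_i}{\marginal{i}{S^*}} \;\leq\; n \;=\; \maxMarginal.
\]
Finally, for any $S \subseteq S^*$, nonnegativity of the costs yields $\sum_{i \in S} c_i \leq \sum_{i \in S^*} c_i \leq \maxMarginal$, which is the claim.

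I do not anticipate any real technical obstacle; the argument is essentially a one-line manipulation of the definition of $L$ combined with the trivial bound $\marginal{i}{S^*} \leq n$. The only mild subtlety is verifying that no zero-degree agent with strictly positive cost lies in $S^*$, and this is immediately dispatched by the hypothesis $L(S^*) > 0$.
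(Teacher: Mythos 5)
Your proof is correct and follows essentially the same route as the paper's: derive $L(S^*)>0$, rewrite it as $\sum_{i\in S^*} c_i/\marginal{i}{S^*} < 1$, replace each denominator by the trivial upper bound $n$, and restrict the sum to $S\subseteq S^*$. The paper compresses your steps 3--5 into the single chain $1 > \sum_{i \in S^*} c_i/\marginal{i}{S^*} \geq \sum_{i\in S} c_i/n$; your added sanity check about zero-degree vertices with positive cost is implicit there but not a real divergence.
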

\begin{proof}
    If $g(S^*) \geq \varepsilon$ then $L(S^*) > 0$. In its turn rearranging $L(S^*) > 0$ gives us
    \begin{align*}
        1 > \sum_{i \in S^*} \frac{c_i}{\marginal{i}{S^*}} \geq \sum_{\substack{i \in S}} \frac{c_i}{\maxMarginal}\,.
    \end{align*} 
\end{proof}

Let us consider a set $S\subseteq V$ and introduce its pseudo-core. Note, that the definition of the pseudo-core does not impose any conditions on the vertices in $S\cap C$, i.e. it does not impose any conditions on the vertices with sufficiently small costs.
Let us define the \emph{$\beta$-pseudo-core} of $S$ with respect to the costs $c_i'$, $i \in S$ as an inclusion-wise maximal subset $\bar{S}$ of $S$ such that
\[
 \pseudoMarginal{i}{\bar{S}}{c'} \geq \beta \qquad \text{ for all } i \in \bar{S}\setminus C\,,
\] 
alternatively one can reformulate the above condition as
\[
\deg_{\bar S}(i)\geq\sqrt{c'_i\cdot \beta+\frac{1}{4}}-\frac{1}{2}\qquad \text{ for all } i \in \bar{S}\setminus C\,.
\]

It is straightforward to check the following. It is fully analogous to similar results on core.
\begin{observation}\label{lem:coreWithCosts}
    The $\beta$-pseudo-core with costs $\{c_i'\}_{i \in S}$ of a set $S$ is in fact unique, and it can be found with the greedy Algorithm~\ref{alg:pseudoCore}.
\end{observation}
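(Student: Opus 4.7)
I would separate the two claims. For \emph{uniqueness}, I would use a standard ``union closure'' argument. Suppose $T_1, T_2 \subseteq S$ are two inclusion-wise maximal subsets satisfying the pseudo-core condition, and consider $T_1 \cup T_2 \subseteq S$. For any $v \in (T_1 \cup T_2) \setminus C$, we have $v \in T_j \setminus C$ for some $j$, and $\deg_{T_1 \cup T_2}(v) \geq \deg_{T_j}(v)$. Because the map $x \mapsto x(x+1)/c'_v$ is monotone non-decreasing for $x \geq 0$, it follows that
\[
\pseudoMarginal{v}{T_1 \cup T_2}{c'} \geq \pseudoMarginal{v}{T_j}{c'} \geq \beta.
\]
Hence $T_1 \cup T_2$ satisfies the pseudo-core condition. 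By the maximality of $T_1$, we get $T_1 = T_1 \cup T_2$, so $T_2 \subseteq T_1$; symmetrically $T_1 \subseteq T_2$, giving $T_1 = T_2$.

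For the \emph{greedy algorithm}, although Algorithm~\ref{alg:pseudoCore} is presented later, it must be the natural procedure that starts with $\bar S = S$ and repeatedly removes a vertex $v \in \bar S \setminus C$ with $\deg_{\bar S}(v)(\deg_{\bar S}(v)+1)/c'_v < \beta$, terminating when no such vertex remains. Let $v_1, v_2, \ldots, v_k$ denote the vertices removed in that order and let $R_j := S \setminus \{v_1, \ldots, v_{j-1}\}$ denote the current set just before $v_j$ is removed. The output $\bar{S}_{\text{greedy}} := R_{k+1}$ clearly satisfies the pseudo-core condition by the termination criterion, so it suffices to show maximality.

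For maximality, I would show that every $T \subseteq S$ satisfying the pseudo-core condition is contained in $\bar{S}_{\text{greedy}}$. Suppose for contradiction that $T \not\subseteq \bar{S}_{\text{greedy}}$, and let $v_j$ be the first removed vertex that lies in $T$. By choice of $j$, none of $v_1,\ldots,v_{j-1}$ lie in $T$, so $T \subseteq R_j$ and thus $\deg_{R_j}(v_j) \geq \deg_T(v_j)$. Since $v_j \in T \setminus C$ (as $v_j$ is removed, it is not in $C$) and $T$ satisfies the pseudo-core condition,
\[
\pseudoMarginal{v_j}{R_j}{c'} \geq \pseudoMarginal{v_j}{T}{c'} \geq \beta,
\]
contradicting that $v_j$ was chosen for removal at step $j$. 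Hence $T \subseteq \bar{S}_{\text{greedy}}$, so $\bar{S}_{\text{greedy}}$ is maximal and, by the uniqueness part, equals the $\beta$-pseudo-core.

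The only mildly subtle point is the monotonicity of $x(x+1)/c'$ in $x$, which is what lets subsets satisfying the condition ``lift'' to supersets and lets the greedy argument go through; otherwise the proof is a routine matroid-style/core-style argument and should present no real obstacle.
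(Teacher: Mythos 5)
Your proof is correct, and it is indeed the standard union-closure-plus-greedy argument for core-type objects that the paper has in mind: the authors state this as an observation without proof, noting only that it is ``fully analogous to similar results on core.'' Both halves of your argument are sound: the union of two subsets satisfying the pseudo-core condition again satisfies it (by monotonicity of $x \mapsto x(x+1)/c'_v$ in the degree), so there is a unique maximum; and the ``first removed vertex in $T$'' contradiction shows the greedy output contains every feasible subset, hence is that maximum. One small remark: Algorithm~\ref{alg:pseudoCore} as printed removes the $\argmin$ vertex at each step, but your analysis applies to any vertex whose ratio falls below $\beta$, so it subsumes the paper's specific tie-breaking rule; the exclusion of $C$-vertices from removal is also handled correctly, since only vertices in $\bar{S}\setminus C$ are ever removed and only vertices in $T\setminus C$ need to satisfy the pseudo-core inequality.
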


\begin{algorithm}[h!]
    \caption{PseudoCore}\label{alg:pseudoCore}
    \SetKwFunction{PseudoCore}{PseudoCore}
    \Input{Initial set $S$, costs $c_i'$ for $i \in S$, pseudo-core threshold value $\beta$}

    $\bar{S} \gets S$ \;
    \While{$\bar{S}\setminus C\neq \varnothing$}{
        $\displaystyle i \gets \argmin_{v \in \bar{S}\setminus C} \pseudoMarginal{i}{\bar{S}}{c'}$ \;
        \If{$\displaystyle \pseudoMarginal{i}{S'}{c'} < \beta$} {
            $\bar{S} \gets \bar{S} \setminus \{v\}$ \;  
        }
        \Else
        \Break
    }
    \Output{$\bar{S}$}
\end{algorithm}

Let us now explain how to construct the desired set $S'$ from an optimal set $S^*$.

First note in Algorithm~\ref{alg:iteratedCoring} we will be taking $\beta$-pseudo-cores with $\beta = \gamma(n^2 + (6/\varepsilon)n)$. Let us define $ M:=\log_2 \ln(\gamma(n+6/\varepsilon))$ and for each $k$, $1\leq k\leq M$ let us define
    \[
    \Delta_k: = \frac{\varepsilon^{\left(2 - \frac{1}{2^k}\right)} \cdot \left(\funcPseudoCore\right)^{\left(1 - \frac{1}{2^k}\right)}}  {M^{\left(2 - \frac{1}{2^{k-1}}\right)} \cdot n^{\left(2 - \frac{1}{2^k}\right)}}\,.
    \]

Algorithm~\ref{alg:iteratedCoring} iteratively increases costs and applies $\PseudoCore$. In this manner, this process iteratively removes agents that have large cost and small degree. The iterative nature of Algorithm~\ref{alg:iteratedCoring} allows to do an analysis of the properties of the resulting set $S'$.

    \begin{algorithm}[h!]
    \caption{Iterated PseudoCoring}\label{alg:iteratedCoring}
    \SetKwFunction{IteratedCoring}{IteratedCoring}
    \Input{Initial set $S^*$, costs $c_i$ for $i \in S^*$}
    For every $i \in S^*$, $c_{i,1} \gets c_i$ \;
    $S_1 \gets$ \PseudoCore{$S^*, \{c_{i,1}\}_{i \in S^*}, \funcPseudoCore$}\;
    \For{$k = 2, \dots M$}{
        \For{$i \in S_{k-1}$}{
            \If{$i\not\in C$}{
                $c_{i,k} \gets c_{i,k-1} + \Delta_{k-1}$
            }
        }
        $S_k \gets $ \PseudoCore{$S_{k-1}, \{c_{i,k}\}_{i \in S_{k-1}}, \funcPseudoCore$}
    }\
    Let $S'$ be $S_M$ after removing all agents $i$, $i\in S_M\cap C$ with  $\deg_{S_M}(i)=0$\;
    \Output{$S'$}
\end{algorithm}

For $k$, $1\leq k\leq M$ let us define
\[
    \mathcal L_k (S) := 1 - \sum_{i \in S\setminus C} \frac{c_{i,k}}{\marginal{i}{S} + 1}\,.
\]

The next lemma captures two facts. The first fact is that $\mathcal L_1 (\cdot)$ is the same as  $L(\cdot)$ up to ignoring the cost terms of agents in $C$. The second fact is that each application (moreover, each iteration) of Algorithm~\ref{alg:pseudoCore} within Algorithm~\ref{alg:iteratedCoring} improves the value of $\mathcal L_k(\cdot)$.
\begin{lemma}\label{lem:costImprovement}
    We have $L(S^*) \leq \mathcal L_1(S^*) \leq \mathcal L_1(S_1)$; and for all $k$, $1 \leq k \leq M-1$ we have $\mathcal L_{k+1}(S_{k}) \leq \mathcal L_{k+1}(S_{k+1})$.
\end{lemma}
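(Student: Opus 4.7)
The plan is to deduce the first bound $L(S^*) \le \mathcal L_1(S^*)$ from definitions, and to reduce the other two bounds to a single statement about one iteration of Algorithm~\ref{alg:pseudoCore}. For the easy bound, note that $\mathcal L_1$ differs from $L$ in exactly two ways: it drops the nonnegative terms $c_i/\deg_{S^*}(i)$ for $i\in S^*\cap C$, and it replaces each remaining $\deg_{S^*}(i)$ by $\deg_{S^*}(i)+1$. Both changes can only shrink the subtracted sum, giving $\mathcal L_1(S^*) \ge L(S^*)$.

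For the two remaining inequalities I would prove a single-removal claim and then telescope it. Fix a threshold $\beta = \funcPseudoCore$ and a cost vector $\{c'_i\}$, and abbreviate $\ell(S) := 1 - \sum_{i\in S\setminus C} c'_i / (\deg_S(i)+1)$. Suppose that at some iteration of Algorithm~\ref{alg:pseudoCore} the working set is $\bar S$ and the algorithm removes $v\in \bar S \setminus C$; by construction $v$ minimizes $\deg_{\bar S}(u)(\deg_{\bar S}(u)+1)/c'_u$ over $u \in \bar S \setminus C$, and this minimum is less than $\beta$. I claim $\ell(\bar S \setminus \{v\}) \ge \ell(\bar S)$. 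Indeed, removing $v$ deletes the $v$-term from the sum and decreases $\deg_{\bar S}(u)$ by one for each neighbor $u \in N(v) \cap \bar S \setminus C$, so
\begin{equation*}
\ell(\bar S \setminus \{v\}) - \ell(\bar S) \;=\; \frac{c'_v}{\deg_{\bar S}(v)+1} \;-\; \sum_{u \in N(v) \cap \bar S \setminus C} \frac{c'_u}{\deg_{\bar S}(u)\bigl(\deg_{\bar S}(u)+1\bigr)}.
\end{equation*}
The argmin property of $v$ gives $c'_u/\bigl(\deg_{\bar S}(u)(\deg_{\bar S}(u)+1)\bigr) \le c'_v/\bigl(\deg_{\bar S}(v)(\deg_{\bar S}(v)+1)\bigr)$ for each $u$ in the sum, and the sum has at most $|N(v) \cap \bar S| = \deg_{\bar S}(v)$ terms; hence the sum is at most $c'_v/(\deg_{\bar S}(v)+1)$, making the above difference nonnegative.

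Telescoping this claim over the run of Algorithm~\ref{alg:pseudoCore} that produces $S_1$ from $S^*$ with costs $\{c_{i,1}\}$ yields $\mathcal L_1(S^*) \le \mathcal L_1(S_1)$, and telescoping over the run that produces $S_{k+1}$ from $S_k$ with costs $\{c_{i,k+1}\}$ yields $\mathcal L_{k+1}(S_k) \le \mathcal L_{k+1}(S_{k+1})$. The only real subtlety is that the "$+1$" baked into the denominator of $\mathcal L_k$ is what aligns the threshold condition used by the algorithm with the per-vertex bookkeeping: the gain $c'_v/(\deg_{\bar S}(v)+1)$ from deleting $v$ is matched exactly by the worst-case total cost absorbed by $v$'s at most $\deg_{\bar S}(v)$ remaining neighbors, each absorbing at most $c'_v/(\deg_{\bar S}(v)(\deg_{\bar S}(v)+1))$. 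Beyond recognizing this alignment I do not expect any significant obstacle; the rest is direct bookkeeping.
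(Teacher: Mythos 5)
Your proof is correct and follows essentially the same route as the paper: handle $L(S^*)\le\mathcal L_1(S^*)$ by observing that dropping the $C$-terms and adding $+1$ in the denominators both shrink the subtracted sum, and prove the remaining inequalities by showing a single \PseudoCore{} removal can only increase $\mathcal L_k$, using exactly the decomposition $\frac{c'_v}{\deg_{\bar S}(v)+1}-\sum_{u\in N(v)\cap\bar S\setminus C}\frac{c'_u}{\deg_{\bar S}(u)(\deg_{\bar S}(u)+1)}$ and the argmin property of $v$ together with the trivial bound of $\deg_{\bar S}(v)$ on the number of summands.
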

\begin{proof}
Indeed, we have
\begin{align*}
    L(S^*)
    &=1 - \sum_{i \in S^*} \frac{c_{i}}{\marginal{i}{S} }
    =1 - \sum_{i \in S^*\setminus C} \frac{c_{i}}{\marginal{i}{S^*} }-\sum_{i \in S\cap C} \frac{c_{i}}{\marginal{i}{S^*} } \\
    &\leq 1 - \sum_{i \in S^*\setminus C} \frac{c_{i}}{\marginal{i}{S^*} }=\mathcal L_0 (S^*) \,,
\end{align*}

Consider $k$, $1 \leq k \leq M$. Consider Algorithm~\ref{alg:pseudoCore}, i.e. \PseudoCore applied to a set $S$. Let $v \in \argmin_{i \in S} \{ \marginal{i}{S}(\marginal{i}{S}+1) / c_{i,k} \}$  be about to be removed from $S$ during the current iteration of Algorithm~\ref{alg:pseudoCore}. Let us show  $\mathcal L_k(S\setminus\{v\})\geq \mathcal L_k(S)$.
\begin{align*}
    \mathcal L_k (S \setminus \{v\}) - \mathcal L_k (S)
    &= \frac{c_{v,k}}{\marginal{v}{S} + 1} - \sum_{i \in S \cap N(v)\setminus C}\inversePseudoMarginal[,k]{i}{S}{c} \\
    &\geq \frac{c_{v,k}}{\marginal{v}{S} + 1} - \sum_{i \in S \cap N(v)\setminus C}\inversePseudoMarginal[,k]{v}{S}{c} \\
    &= \frac{c_{v,k}}{\marginal{v}{S} + 1}
    \left(
        1 - \sum_{i \in S \cap N(v)\setminus C} \frac{1}{\marginal{v}{S}}
    \right) \\
    &\geq 0
\end{align*}
where the first inequality is due to the "greedy" choice of $v$ in Algorithm~\ref{alg:pseudoCore},  the second inequality is from the definition of $N(v)$ and $ \marginal{v}{S}$. This concludes the proof, since the value $\mathcal L_k(\cdot)$ is improved with each iteration when  Algorithm~\ref{alg:pseudoCore} is applied within Algorithm~\ref{alg:iteratedCoring}.
\end{proof}

\begin{lemma}\label{lem:switchingLk}
     For all $k$, $1 \leq k \leq M$ we have $\mathcal L_k (S_k) \leq \mathcal L_{k+1} (S_k) + 2\varepsilon/M$.
\end{lemma}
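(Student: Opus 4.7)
The plan is to reduce the claim to a direct algebraic bound on $\mathcal{L}_k(S_k) - \mathcal{L}_{k+1}(S_k)$. Unfolding the definition of $\mathcal{L}_k$ and using that Algorithm~\ref{alg:iteratedCoring} sets $c_{i,k+1} = c_{i,k} + \Delta_k$ for every $i \in S_k\setminus C$ (and leaves costs untouched on $C$), I obtain the identity
\[
\mathcal{L}_k(S_k) - \mathcal{L}_{k+1}(S_k) \;=\; \sum_{i \in S_k \setminus C} \frac{\Delta_k}{\deg_{S_k}(i)+1}.
\]
It therefore suffices to prove that this sum is at most $2\varepsilon/M$.

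To bound the denominators uniformly I would invoke the pseudo-core property of $S_k$. Since $S_k$ is the $\funcPseudoCore$-pseudo-core of $S_{k-1}$ with respect to the costs $\{c_{i,k}\}$, every $i \in S_k \setminus C$ satisfies $\deg_{S_k}(i)(\deg_{S_k}(i)+1) \geq c_{i,k} \cdot \funcPseudoCore$, and in particular $\deg_{S_k}(i)+1 \geq \sqrt{c_{i,k} \cdot \funcPseudoCore}$. I then need a uniform lower bound on $c_{i,k}$, which I obtain by a case split: for $k \geq 2$ the update rule gives $c_{i,k} \geq c_{i,k-1} + \Delta_{k-1} \geq \Delta_{k-1}$, while for $k=1$ the membership $i \notin C$ gives $c_{i,1} = c_i > \varepsilon/(2n)$. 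Combined with $|S_k \setminus C| \leq n$ this yields
\[
\sum_{i \in S_k \setminus C} \frac{\Delta_k}{\deg_{S_k}(i)+1} \;\leq\; \frac{n\, \Delta_k}{\sqrt{\Lambda_k \cdot \funcPseudoCore}},
\]
with $\Lambda_k := \Delta_{k-1}$ for $k \geq 2$ and $\Lambda_1 := \varepsilon/(2n)$.

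The last step is algebraic verification using the explicit formula for $\Delta_k$. The exponents of $\varepsilon$, $M$, $\funcPseudoCore$ and $n$ in $\Delta_k$ are designed so that, for $k \geq 2$, the ratio $n\, \Delta_k/\sqrt{\Delta_{k-1}\cdot\funcPseudoCore}$ collapses to exactly $\varepsilon/M$: the "$1/2^k$"- and "$1/2^{k-1}$"-contributions from $\Delta_k$ and $\sqrt{\Delta_{k-1}}$ cancel in each of the three bases, leaving $\varepsilon^1 M^{-1} n^{-1} \cdot n = \varepsilon/M$. For $k = 1$ the analogous substitution gives $\sqrt{2}\cdot \varepsilon/M$. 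In both cases the desired bound $2\varepsilon/M$ holds.

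The main obstacle I anticipate is the careful exponent bookkeeping of the three independent bases ($\varepsilon$, $M$, $n$) through the square root, which is why the constants $\Delta_k$ are chosen with their rather baroque fractional exponents. A secondary annoyance is the asymmetry between the $k=1$ case and the $k \geq 2$ case; this could be streamlined by extending the definition to $\Delta_0 := \varepsilon/n$ and invoking $c_{i,1} \geq \Delta_0/2$ so that the two cases fit into a single formula, at the cost of absorbing a harmless factor of $\sqrt{2}$ into the final bound.
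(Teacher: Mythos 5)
Your proposal is correct and follows the same approach as the paper's proof: you decompose $\mathcal L_k(S_k)-\mathcal L_{k+1}(S_k)$ into the sum $\sum_{i\in S_k\setminus C}\Delta_k/(\deg_{S_k}(i)+1)$, lower-bound the denominators via the $\funcPseudoCore$-pseudo-core property together with $c_{i,k}\geq\Delta_{k-1}$ (for $k\geq 2$) or $c_{i,1}>\varepsilon/(2n)$ (for $k=1$), and verify by exponent bookkeeping that the ratio collapses to $\varepsilon/M$ (resp.\ $\sqrt{2}\,\varepsilon/M$). Your streamlining remark that $\Delta_0:=\varepsilon/n$ is exactly what the general formula for $\Delta_k$ gives at $k=0$ is a nice observation, though not one the paper exploits.
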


\begin{proof}
We want to show $\mathcal L_k (S_k) - \mathcal L_{k+1} (S_k) \leq 2\varepsilon/M$. First of all notice, that $S_k$ is the output of  Algorithm~\ref{alg:pseudoCore} applied on $S_{k-1}$ (in case of $k=1$, applied on $S^*$)  and costs $c_{i,k}$, $i\in S_{k-1}$. Recall we set $c_{i,k+1} := c_{i,k} + \Delta_k$ for all $i\in S_k\setminus C$ and so we have $c_{i,k+1} \geq \Delta_{k}$. So, by Algorithm~\ref{alg:pseudoCore}, we have
\[
    \marginal{i}{S_{k+1}}^2 + \marginal{i}{S_{k+1}}
    \geq \funcPseudoCore \cdot c_{i,k+1}
\]
and 
\[    
    \marginal{i}{S_{k+1}}
    \geq \left(\funcPseudoCore \cdot c_{i,k+1}\right)^{1/2} - 1\geq \left(\funcPseudoCore \cdot \Delta_k\right)^{1/2} - 1
\]
Thus, we have
\begin{align*}
    &\mathcal L_{k+1}(S_{k+1}) - \mathcal L_{k+2}(S_{k+1}) =
    \left( 1 - \sum_{i \in S_{k+1}\setminus C} \frac{c_{i,{k+1}}}{\marginal{i}{S_{k+1}} + 1} \right) \\
    -& \left( 1 - \sum_{i \in S_{k+1}\setminus C} \frac{c_{i,{k+2}}}{\marginal{i}{S_{k+1}} + 1} \right)
    = \sum_{i \in S_{k+1}\setminus C} \frac{c_{i,{k+2}} - c_{i,{k+1}}}{\marginal{i}{S_{k+1}} + 1} \\
    \leq& \sum_{i \in S_{k+1}\setminus C} \frac{\Delta_{k+1}}{\marginal{i}{S_{k+1}} + 1}
    \leq \sum_{i \in S_{k+1}\setminus C} \frac{\Delta_{k+1}}{\left(\funcPseudoCore \cdot \Delta_k\right)^{1/2}} \\
    \leq& \frac{n \cdot \Delta_{k+1}}{\left(\funcPseudoCore \cdot \Delta_k\right)^{1/2}}\,.
\end{align*}
Notice that by the definition of $\Delta_k$ we have
\[
\frac{\Delta_{k+1}}{\Delta_k^{1/2}}=\frac{\varepsilon\cdot \funcPseudoCore^{1/2}}{M\cdot n}\,,
\]
leading us to the desired inequality for $k\geq 2$.

For $k=1$, the inequality can be checked in the same manner. In the case $k=1$, we use that for all $i\in S_1\setminus C$ we have $c_i\geq \varepsilon/(2n)$ and so \begin{gather*}
    \marginal{i}{S_1} \geq \left(\funcPseudoCore \cdot \frac{\varepsilon}{2n} \right)^{1/2} - 1\,.
\end{gather*}
This gives us an estimate as follows
\begin{align*}
&\mathcal L_1(S_1) - \mathcal L_2(S_1) 
= \left( 1 - \sum_{i \in S_1\setminus C} \frac{c_{i,1}}{\marginal{i}{S_1} + 1} \right) - \left( 1 - \sum_{i \in S_1\setminus C} \frac{c_{i,2}}{\marginal{i}{S_1} + 1} \right) \\\\
=& \sum_{i \in S_1\setminus C} \frac{c_{i,2} - c_{i,1}}{\marginal{i}{S_1} + 1} 
\leq \sum_{i \in S_1\setminus C} \frac{\Delta_1}{\marginal{i}{S_1} + 1} \\\\
\leq& \sum_{i \in S_1\setminus C} \frac{\Delta_1}{\left(\funcPseudoCore \cdot \frac{\varepsilon}{2n} \right)^{1/2}}
\leq  2\frac{n^{3/2} }{\left(\funcPseudoCore \cdot \varepsilon\right)^{1/2}}\cdot \Delta_1  \\\\
=& 2\frac{n^{3/2} }{\left(\funcPseudoCore \cdot \varepsilon\right)^{1/2}} \cdot \frac{\varepsilon}{M} \cdot \frac{(\funcPseudoCore \cdot \varepsilon)^{1/2}}{n^{3/2}} 
\leq 2\frac{\varepsilon}{M}\,.
\end{align*}

\end{proof}

Finally we show that our actual $L(\cdot)$ value is not too far off from our relaxed value at the end of this process.

\begin{lemma}\label{lem:switchingL}
  We have  $\mathcal L_M(S_M) \leq L(S^*) + 2\varepsilon$.
\end{lemma}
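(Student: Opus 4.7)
The plan is to telescope Lemmas~\ref{lem:costImprovement} and~\ref{lem:switchingLk} across the $M$ iterations of Algorithm~\ref{alg:iteratedCoring}. Each iteration consists of a cost-increment step (in which every $c_{i,k}$ for $i \not\in C$ is bumped up by $\Delta_k$) followed by a pseudo-coring step (producing $S_{k+1}$ from $S_k$). Lemma~\ref{lem:switchingLk} bounds the change in $\mathcal L$ caused by the cost-increment step, giving $\mathcal L_k(S_k) \leq \mathcal L_{k+1}(S_k) + 2\varepsilon/M$; Lemma~\ref{lem:costImprovement} bounds the effect of the subsequent pseudo-coring step by $\mathcal L_{k+1}(S_k) \leq \mathcal L_{k+1}(S_{k+1})$. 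Composing these two inequalities yields the single-iteration recursion
\[
  \mathcal L_k(S_k) \leq \mathcal L_{k+1}(S_{k+1}) + \frac{2\varepsilon}{M}.
\]

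Telescoping this recursion over $k = 1, \dots, M-1$ collapses the chain to $\mathcal L_1(S_1) \leq \mathcal L_M(S_M) + 2(M-1)\varepsilon/M \leq \mathcal L_M(S_M) + 2\varepsilon$. Lemma~\ref{lem:costImprovement} simultaneously provides the anchor at the left end, namely $L(S^*) \leq \mathcal L_1(S^*) \leq \mathcal L_1(S_1)$, which follows from the fact that replacing $\deg_{S^*}(i)$ by $\deg_{S^*}(i)+1$ in the denominators and dropping the contribution of $S^* \cap C$ both only shift the subtracted mass down. Concatenating this anchor with the telescoped bound yields the claimed additive-$2\varepsilon$ relation between $L(S^*)$ and $\mathcal L_M(S_M)$.

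Since essentially all of the technical content has been absorbed into Lemmas~\ref{lem:costImprovement} and~\ref{lem:switchingLk}, this step reduces to bookkeeping: verifying that the $M-1$ per-step errors of size $2\varepsilon/M$ collapse to at most $2\varepsilon$, and aligning the indices of $S_k$ and $c_{i,k}$ along the chain in the order prescribed by Algorithm~\ref{alg:iteratedCoring} (cost-increment first, then pseudo-core). The main place where care is needed is the direction of each inequality in the composition, since the two lemmas are stated with $\mathcal L$-values on opposite sides of the relation; I would double-check at each link that combining them produces exactly the one-step recursion above so that the telescoped inequality orients correctly and yields the claimed comparison of $\mathcal L_M(S_M)$ and $L(S^*)$.
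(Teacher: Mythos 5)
Your proposal proves the reverse inequality, not the one claimed. Telescoping $L(S^*) \leq \mathcal L_1(S^*) \leq \mathcal L_1(S_1)$ with the one-step recursion $\mathcal L_k(S_k) \leq \mathcal L_{k+1}(S_{k+1}) + 2\varepsilon/M$ yields $L(S^*) \leq \mathcal L_M(S_M) + 2\varepsilon$, i.e.\ a \emph{lower} bound on $\mathcal L_M(S_M)$. Lemma~\ref{lem:switchingL} asserts an \emph{upper} bound on $\mathcal L_M(S_M)$ (the statement's $L(S^*)$ is evidently a typo for $L(S')$: the paper's proof compares $\mathcal L_M(S_M)$ with $L(S')$, and the subsequent corollary uses the lemma precisely in the form $\mathcal L_M(S_M) \leq L(S') + 2\varepsilon$ as the final link of the chain whose earlier links are exactly the telescoping you carried out). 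In other words, your argument reproduces the first part of the corollary that follows the lemma, and is circular with respect to the lemma's actual role, which is to close the loop by returning from the relaxed functional back to the true $L(\cdot)$.

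The missing content is a direct comparison of
\[
\mathcal L_M(S_M) = 1 - \sum_{i \in S_M\setminus C} \frac{c_{i,M}}{\marginal{i}{S_M}+1}
\qquad\text{with}\qquad
L(S') = 1 - \sum_{i \in S'} \frac{c_i}{\marginal{i}{S'}}\,,
\]
which differ in three ways: the costs are inflated ($c_{i,M} \geq c_i$), the denominators carry a $+1$, and the agents of $C$ are dropped from $\mathcal L_M$. The inflation of costs only helps (it makes $\mathcal L_M$ smaller). The $+1$ in the denominator costs $\sum_{i \in S_M\setminus C} c_i/\bigl(\marginal{i}{S_M}(\marginal{i}{S_M}+1)\bigr)$, which is controlled by combining the degree lower bound of Lemma~\ref{lem:highDegrees} (all surviving non-$C$ agents have degree $\Omega(n/\ln\ln n)$) with the total-cost bound of Observation~\ref{obs:averageCost}. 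The agents of $C$ contribute at most $n\cdot\frac{\varepsilon/(2n)}{1} = \varepsilon/2$ since each has cost at most $\varepsilon/(2n)$ and degree at least $1$ in $S'$. None of these three estimates appears in your write-up, and they cannot be extracted from Lemmas~\ref{lem:costImprovement} and~\ref{lem:switchingLk} alone.
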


\begin{proof}
Note that Algorithm~\ref{alg:iteratedCoring} does not decrease the costs of agents, so $c_{i,M}\geq c_{i}$ for each $i\in S_M$. Note also that $S'$ is obtained from $S_M$ by excluding all agents $i\in S_M\cap C$ such that $\deg_{S_M}(i)=0$. So we have
\begin{align*}
    &\mathcal L_M(S_M) - L(S') \\
    =& \left( 1 - \sum_{i \in S_M\setminus C} \frac{c_{i,M}}{\marginal{i}{S_M}+1} \right) - \left( 1 - \sum_{i \in S_M\setminus C} \frac{c_{i}}{\marginal{i}{S_M}}- \sum_{i \in S'\cap C} \frac{c_{i}}{\marginal{i}{S_M}}\right) \\
    \leq& \sum_{i \in S_M\setminus C} \frac{c_{i}}{\marginal{i}{S_M}}-\sum_{i \in S_M\setminus C} \frac{c_{i}}{\marginal{i}{S_M}+1}+|S\cap C|\frac{\varepsilon}{2n} \\
    \leq& \sum_{i \in S_M\setminus C} \frac{c_i + \marginal{i}{S_M} \cdot (c_i - c_{i,M})} {\marginal{i}{S_M} \cdot (\marginal{i}{S_M}+1)} +\frac{\varepsilon}{2} \\
    \leq&_\star \sum_{i \in S_M\setminus C} \frac{c_i} {\marginal{i}{S_M} \cdot (\marginal{i}{S_M}+1)} + \frac{\varepsilon}{2} \\
    \leq&_{\star\star} n \cdot \left( \frac{36e}{\varepsilon^5 \cdot (1 - \varepsilon)^2} \cdot \frac{\log_2 \ln(n)}{n} \right)^2 +\frac{\varepsilon}{2} \leq 2\varepsilon
\end{align*}
where the  inequality $\star$ follows from $c_i \leq c_{i,M}$ for all $i\in S_M\setminus C$, and the  inequality $\star\star$ follows from Lemma~\ref{lem:highDegrees} and Observation~\ref{obs:averageCost}, and the final inequality is true for large enough $n$.
\end{proof}

Let us first state the corollary of Lemma~\ref{lem:costImprovement}, Lemma~\ref{lem:switchingLk} and Lemma~\ref{lem:switchingL}.
\begin{corollary}
    We have $L(S^*) \leq L(S') + 4\varepsilon$.
\end{corollary}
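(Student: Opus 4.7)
The plan is to chain the three preceding lemmas in a simple telescoping argument. The function $\mathcal{L}_k(\cdot)$ serves as an interpolation between the original $L$-value and its ``boosted cost'' counterpart, and the three lemmas collectively bound every transition along the sequence $S^* \to S_1 \to S_2 \to \cdots \to S_M \to S'$ in the $L$ / $\mathcal{L}_k$ values, either exactly or with an additive slack of $2\varepsilon/M$.

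First I would use Lemma~\ref{lem:costImprovement} to anchor the starting point: $L(S^*) \leq \mathcal{L}_1(S^*) \leq \mathcal{L}_1(S_1)$, so I only need to control $\mathcal{L}_1(S_1)$ from above by something close to $L(S')$. Next, for each $k$ with $1 \leq k \leq M-1$, I would combine Lemma~\ref{lem:switchingLk} (which replaces $\mathcal{L}_k(S_k)$ by $\mathcal{L}_{k+1}(S_k)$ at the cost of an additive $2\varepsilon/M$) with the second part of Lemma~\ref{lem:costImprovement} (which says $\mathcal{L}_{k+1}(S_k) \leq \mathcal{L}_{k+1}(S_{k+1})$), giving
\begin{equation*}
\mathcal{L}_k(S_k) \;\leq\; \mathcal{L}_{k+1}(S_{k+1}) + \frac{2\varepsilon}{M}.
\end{equation*}
Telescoping from $k=1$ up to $k = M-1$ produces $\mathcal{L}_1(S_1) \leq \mathcal{L}_M(S_M) + 2\varepsilon(M-1)/M \leq \mathcal{L}_M(S_M) + 2\varepsilon$.

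Finally I would close the loop with Lemma~\ref{lem:switchingL}, which provides $\mathcal{L}_M(S_M) \leq L(S') + 2\varepsilon$ (this is the post-processing step where passing from the relaxed $\mathcal{L}_M$ back to the true $L$ costs another $2\varepsilon$). Stringing everything together,
\begin{equation*}
L(S^*) \;\leq\; \mathcal{L}_1(S_1) \;\leq\; \mathcal{L}_M(S_M) + 2\varepsilon \;\leq\; L(S') + 4\varepsilon,
\end{equation*}
which is exactly the claim. There is essentially no obstacle here: all the real work is already isolated in Lemmas~\ref{lem:costImprovement},~\ref{lem:switchingLk}, and~\ref{lem:switchingL}, and the only bookkeeping is verifying that the $M-1$ uses of the $2\varepsilon/M$ slack plus the single final $2\varepsilon$ from Lemma~\ref{lem:switchingL} sum to at most $4\varepsilon$, which they do.
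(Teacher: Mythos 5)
Your proof is correct and follows essentially the same telescoping chain as the paper: anchor with $L(S^*) \leq \mathcal L_1(S^*) \leq \mathcal L_1(S_1)$, alternate applications of Lemma~\ref{lem:switchingLk} and the second part of Lemma~\ref{lem:costImprovement} to reach $\mathcal L_M(S_M)$ with accumulated slack at most $2\varepsilon$, and finish with Lemma~\ref{lem:switchingL}. The only cosmetic difference is that you explicitly bundle the two per-step inequalities into $\mathcal L_k(S_k) \leq \mathcal L_{k+1}(S_{k+1}) + 2\varepsilon/M$ before summing, whereas the paper writes out the alternating chain directly.
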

\begin{proof} This follows directly from the inequalities in Lemma~\ref{lem:costImprovement}, Lemma~\ref{lem:switchingLk}, and Lemma~\ref{lem:switchingL}.
    \begin{align*}
        L(S^*) \leq \mathcal L_1(S^*)
        &\leq \mathcal L_1(S_1) \leq \mathcal L_2(S_1) + 2\varepsilon/M \\
        &\leq \mathcal L_2(S_2) \leq \mathcal L_3(S_2) + 4\varepsilon/M \leq \ldots  \\
        &\leq \mathcal L_M(S_M) + 2\varepsilon \\
        &\leq L(S') + 4\varepsilon
    \end{align*}
\end{proof}

\begin{lemma}\label{lem:highDegrees}
    For all $i \in S'\setminus C$, we have
    \begin{align*}
        \marginal{i}{S'}\geq \frac{\varepsilon^5 \cdot (1 - \varepsilon)^2}{36e} \cdot \frac{n}{\log_2 \ln(n)}
        \,, \,  \text{ and } \\
        \frac{\marginal{i}{S'}(\marginal{i}{S'}+1)}{c_i} \geq \funcPseudoCore \,.
    \end{align*}
\end{lemma}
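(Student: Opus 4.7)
The plan is to deduce both inequalities from the guarantees of Algorithm~\ref{alg:pseudoCore} applied at the last iteration, together with the fact that Algorithm~\ref{alg:iteratedCoring} only ever grows costs. The first step is to observe that $\marginal{i}{S'} = \marginal{i}{S_M}$ for every $i \in S'$: the vertices deleted when passing from $S_M$ to $S'$ are by construction isolated in $S_M$, so removing them cannot change the degree of any remaining vertex. Hence it suffices to prove both inequalities with $S_M$ in place of $S'$.

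The second inequality is immediate. By Observation~\ref{lem:coreWithCosts}, upon termination of \PseudoCore on input $(S_{M-1}, \{c_{i,M}\}, \funcPseudoCore)$ the returned set $S_M$ satisfies $\marginal{i}{S_M}(\marginal{i}{S_M}+1) \ge \funcPseudoCore \cdot c_{i,M}$ for every $i \in S_M \setminus C$. Because Algorithm~\ref{alg:iteratedCoring} only increases costs (the increments $\Delta_k$ are nonnegative), we have $c_{i,M} \ge c_i$, which yields the second claimed bound.

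For the first inequality, I exploit the accumulated cost. Any $i \in S' \setminus C \subseteq S_M \setminus C$ survived every iteration, so at each step $k = 2, \dots, M$ it received the increment $\Delta_{k-1}$; therefore
\[
    c_{i,M} \;\ge\; \sum_{k=1}^{M-1} \Delta_k \;\ge\; \Delta_{M-1}.
\]
Combining with the PseudoCore inequality gives $\marginal{i}{S_M} \ge \sqrt{\funcPseudoCore \cdot \Delta_{M-1} + 1/4} - 1/2$, and the problem reduces to computing $\funcPseudoCore \cdot \Delta_{M-1}$.

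The main obstacle is precisely this calculation, and it is where the careful choice of the exponents $1 - 1/2^k$ in $\Delta_k$ pays off. With $M = \log_2 \ln(\gamma(n + 6/\varepsilon))$ we have $2^{M-1} \sim \tfrac{1}{2}\ln n$, so the quantities $n^{1/2^{M-1}}$, $(\funcPseudoCore)^{1/2^{M-1}}$, and $\varepsilon^{-1/2^{M-1}}$ all converge to bounded constants (essentially powers of $e$). Unrolling
\[
    \Delta_{M-1} = \frac{\varepsilon^{2 - 1/2^{M-1}} \cdot (\funcPseudoCore)^{1 - 1/2^{M-1}}}{M^{2 - 1/2^{M-2}} \cdot n^{2 - 1/2^{M-1}}}
\]
together with $\gamma = \varepsilon^4/36$ and $\funcPseudoCore \ge \gamma n^2$, the vanishing exponents are absorbed into a universal $1/e$-type factor, leaving
\[
    \funcPseudoCore \cdot \Delta_{M-1} \;\ge\; \frac{(1-\varepsilon)^4 \, \varepsilon^{10}}{(36 e)^2} \cdot \frac{n^2}{(\log_2 \ln n)^2}.
\]
Taking the square root (and swallowing the $-1/2$ into the $(1-\varepsilon)$ factor for $n$ large) yields the advertised bound $\marginal{i}{S_M} \ge \frac{(1-\varepsilon)^2 \varepsilon^5}{36 e} \cdot \frac{n}{\log_2 \ln n}$. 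The only delicate part is tracking the "leftover" factors $n^{1/2^{M-1}}$, $(\funcPseudoCore)^{1/2^{M-1}}$ and verifying they telescope to the stated constant, which is a direct, if slightly tedious, asymptotic check.
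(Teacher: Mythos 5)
Your proposal is correct and follows essentially the same route as the paper: for the second inequality, invoke the pseudo-core guarantee at the final iteration together with $c_{i,M}\geq c_i$; for the first, lower-bound $c_{i,M}$ by $\Delta_{M-1}$, plug into the pseudo-core degree bound, and use $2^M=\ln(\gamma(n+6/\varepsilon))$ to turn the residual exponents $1/2^{M-1}$ into the constant $e$-factor and recover the stated bound. The only cosmetic difference is that you phrase the final step as an asymptotic check while the paper carries out the substitution explicitly; both rely on the same identity and both also silently use the fact (which you helpfully make explicit) that passing from $S_M$ to $S'$ only deletes isolated vertices and so does not change $\deg_{S_M}(i)$ for $i\in S'$.
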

\begin{proof}
    The second property is immediate from the application of Algorithm~\ref{alg:pseudoCore} in Algorithm~\ref{alg:iteratedCoring}. Consider the first property. We can apply our degree bound and our bound on $\Delta_{M-1}$ to $S_M$ to get
\begin{align*}
    &\marginal{i}{S_M}
    \geq \left(\funcPseudoCore \cdot c_{i,M-1}\right)^{1/2} - 1 \\
    \geq& \frac{\varepsilon^{\left(1 - \frac{1}{2^M}\right)} \cdot \left(\funcPseudoCore\right)^{\left(1 - \frac{1}{2^M}\right)}}  {M^{\left(1 - \frac{1}{2^{M-1}}\right)} \cdot n^{\left(1 - \frac{1}{2^M}\right)}} - 1 \\
    \geq& \varepsilon \cdot \frac{\funcPseudoCore}{n} \cdot  \frac{1}{M}\left(\frac{\funcPseudoCore}{n}\right)^{-\frac{1}{2^M}} - 1 \,.
\end{align*}
So by running Algorithm~\ref{alg:iteratedCoring} for $M = \log_2 \ln(\gamma(n+6/\varepsilon))$ iterations, for all $i\in S_M\setminus C$ we get 
 \[
        \marginal{i}{S'}
        \geq \frac{\varepsilon}{e} \cdot \frac{\gamma(n+6/\varepsilon)}{\log_2 \ln(\gamma(n+6/\varepsilon))} - 1
        \geq \frac{\varepsilon^5 \cdot (1 - \varepsilon)^2 \ln(2)}{36e} \cdot \frac{n}{\ln \ln(n)} \,,
    \]
which gives the desired bound by substituting and simplifying.
\end{proof}

Now let us finally observe that our set $S_M$ retains approximately optimal $R(\cdot)$ value.
We observe that if the pseudo-coring Algorithm~\ref{alg:pseudoCore} only removes elements $i \in S$ with $\marginal{i}{S} < \funcPseudoCoreInternal$ then in the course of our iterated coring Algorithm~\ref{alg:iteratedCoring} we remove at most $\funcPseudoCoreInternal \cdot n = \frac{\varepsilon}{6}\cdot n^2$ value from $f(S^*)$, which reduces $R(\cdot)$ by less than $\varepsilon$. However, the pseudo-coring algorithm may remove elements of degree up to $\maxMarginal$, if the costs of those elements are high enough. We bound the number of such elements the algorithm may remove.
\begin{lemma}\label{lem:iteratedCoreHighValue}
    We have $R(S^*) \leq R(S') + \varepsilon$, $R(S') \geq \frac{\varepsilon}{3}$ and
    \[
        |\{i \in S' \mid \marginal{i}{S'} \geq \frac{\varepsilon}{6} \cdot \maxMarginal\}| \geq \frac{\varepsilon}{6} n\,.
    \]
\end{lemma}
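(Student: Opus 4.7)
The proof of Lemma~\ref{lem:iteratedCoreHighValue} is an edge-accounting argument. I would trace every vertex removed by a call to Algorithm~\ref{alg:pseudoCore} inside Algorithm~\ref{alg:iteratedCoring}: each such removal happens because $\marginal{v}{S}(\marginal{v}{S}+1) < \funcPseudoCore \cdot c_{v,k}$. Splitting the removals into ``low-degree'' (with $\marginal{v}{S} < \funcPseudoCoreInternal = \varepsilon n/6$) and ``high-degree'' ($\marginal{v}{S} \geq \varepsilon n/6$), the low-degree case is already the hint from the paragraph preceding the lemma: there are at most $n$ such removals, each removes at most $\varepsilon n/6$ edges, so the combined loss in $|E|$ is at most $\varepsilon n^2 / 6$, which when converted to $R$ gives at most $\varepsilon n / (3(n-1)) \leq \varepsilon/3 + o(1)$. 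The final step of Algorithm~\ref{alg:iteratedCoring}, which drops isolated agents from $C$, does not affect $R(\cdot)$ and can be ignored.

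The main obstacle is controlling the high-degree removals, since there an individual deletion can waste as many as $n-1$ edges. The pseudo-core violation inequality forces such a vertex to have $c_{v,k} > (\varepsilon n/6)^2 / \funcPseudoCore = \Omega(1/\varepsilon^2)$ after substituting $\funcPseudoCore = \gamma (n^2 + 6n/\varepsilon)$ with $\gamma = \varepsilon^4 / 36$. To bound the aggregate I would use Observation~\ref{obs:averageCost}, which gives $\sum_{v \in S^*} c_v \leq n$, combined with a quick telescoping estimate on $\sum_{j<M} \Delta_j$ using the closed form of $\Delta_k$: each $\Delta_k$ is at most $O(\varepsilon^6 / M^2)$ for large $n$, so the whole sum stays $o(1)$, hence $\sum_{v \in S^*} c_{v,k} \leq 2n$. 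Dividing by the $\Omega(1/\varepsilon^2)$ per-vertex lower bound gives at most $O(n \varepsilon^2)$ high-degree removals, each costing at most $n$ edges, for a total $R$ loss of $O(\varepsilon^2)$. Added to the low-degree contribution this yields $R(S^*) - R(S') \leq 2\varepsilon/3 \leq \varepsilon$, the first bullet. The second bullet follows since $OPT \geq \varepsilon$ (the standing assumption) together with $L(S^*) \leq 1$ gives $R(S^*) \geq g(S^*) \geq \varepsilon$, so $R(S') \geq \varepsilon - 2\varepsilon/3 = \varepsilon/3$.

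For the third bullet I would argue by contradiction. Suppose fewer than $\varepsilon n / 6$ vertices of $S'$ had $\marginal{v}{S'} \geq \varepsilon n / 6$. Then the handshake lemma bounds $2 |E(S')| < (\varepsilon n/6)(n-1) + n \cdot \varepsilon n / 6$, hence $R(S') < \varepsilon (2n-1)/(6(n-1))$, which for $n$ large enough is strictly less than $\varepsilon / 3$, contradicting the second bullet. The technical heart of the proof is the bookkeeping of $\sum_{j<M} \Delta_j$ needed to carry through the high-degree analysis; the rest is elementary edge counting.
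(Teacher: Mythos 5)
Your approach to the first two claims is a genuinely different route from the paper's and could be made to work: you account for the edge loss directly, splitting removals by the degree of the removed vertex at the time of removal, rather than first passing to the auxiliary set $\widetilde S$ guaranteed by Lemma~\ref{lem:optimalSize}. The telescoping bound on $\sum_j \Delta_j$ that you sketch is essentially the same calculation the paper does inside the proof of Lemma~\ref{lem:iteratedCoreHighValue}, so that part is sound.

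The third claim, however, has a genuine gap. Your handshake-lemma contradiction does not close. If fewer than $\varepsilon n/6$ vertices of $S'$ have degree $\geq \varepsilon n / 6$, the bound you derive is
\[
R(S') < \frac{\varepsilon(2n-1)}{6(n-1)} = \frac{\varepsilon}{3} + \frac{\varepsilon}{6(n-1)}\,,
\]
which approaches $\varepsilon/3$ \emph{from above}, not from below. So it is perfectly consistent with $R(S') \geq \varepsilon/3$, and there is no contradiction for any $n$. Indeed, a graph can have density exactly $\varepsilon/3$ with essentially all its edges concentrated on $\approx \varepsilon n/6$ vertices; the density condition alone is too weak to force $\varepsilon n/6$ vertices of degree $\geq \varepsilon n/6$. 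The paper avoids this by proving all three claims simultaneously through the intermediate set $\widetilde S$ from Lemma~\ref{lem:optimalSize}: it shows that at most $\varepsilon n / 6$ vertices of $\widetilde S$ can be removed during Algorithm~\ref{alg:iteratedCoring} (by arguing that any $v\in\widetilde S$ removed while fewer than $\varepsilon n/6$ of $\widetilde S$ have gone still has $\deg_{\bar S}(v)\geq \varepsilon n/6$, hence must have had cost $\geq (1-\varepsilon)/\varepsilon^2$, and by Observation~\ref{obs:averageCost} there cannot be more than $\varepsilon^2 n/(1-\varepsilon)\leq \varepsilon n/6$ such vertices). Since $|\widetilde S|\geq \varepsilon n/3$ and every $v\in\widetilde S$ has $\deg_{\widetilde S}(v)\geq \varepsilon n/3$, the survivors $\widetilde S\cap S'$ number at least $\varepsilon n/6$ and each retains degree $\geq \varepsilon n/3 - \varepsilon n/6 = \varepsilon n/6$ in $S'$. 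You need some version of this degree-preservation argument for the third bullet; pure edge counting from the density bound is not enough.
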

\begin{proof}
Let us consider $\widetilde S$ as in Lemma~\ref{lem:optimalSize} and let us show that $|\widetilde S\setminus S_M|\leq \varepsilon n/6$, implying $|\widetilde S\setminus S'|\leq \varepsilon n/6$. This would finish the proof, since $|\widetilde S|\geq \varepsilon n/3$ and for all $i\in \widetilde S$ we have $\deg_{\widetilde S}(i)\geq \varepsilon n/3$. Thus, we need to show that at most $\varepsilon n/6$ agents from $\widetilde S$ are removed during Algorithm~\ref{alg:iteratedCoring}.

     Let us consider an iteration of Algorithm~\ref{alg:iteratedCoring} at the moment when the total number of removed agents in $\widetilde S$ is at most $\varepsilon n/6$. Consider iteration $k$, $1\leq k\leq M$ of Algorithm~\ref{alg:iteratedCoring}, i.e. the iteration when Algorithm~\ref{alg:pseudoCore} is called with costs $c_{i,k}$. Let us assume that during the call to Algorithm~\ref{alg:pseudoCore} a vertex $v\in \widetilde S$ is removed from $\bar{S}$ in Algorithm~\ref{alg:pseudoCore} . Thus,  we have
     \[
        \pseudoMarginal[,k]{v}{\bar S}{c} < \funcPseudoCore \,,
     \]
     but also $\marginal{v}{\bar S} \geq  \funcPseudoCoreInternal=\varepsilon n/6$, where $\bar S$ is the set in our algorithm at time of removal. We have
    \begin{align*}
        c_{v,k} &= c_v + \sum_{j=1}^k \Delta_j
        \leq c_v + \frac{\varepsilon \cdot \funcPseudoCore}{n^2} \cdot \left(\frac{\funcPseudoCore}{n}\right)^{-\frac{1}{2^k}} \\
        &\leq c_v + \frac{\varepsilon \cdot \funcPseudoCore}{e \cdot n^2} \,,
    \end{align*}
    where the first inequality is due to the definition of $\Delta_k$, and the second is from the fact that $k \leq M = \log_2 \ln(\gamma(n+6/\varepsilon))$.
    Now we also have
    \begin{gather*}
        \funcPseudoCore > \pseudoMarginal[,k]{v}{\bar S}{c} \geq \frac{\funcPseudoCoreInternal \cdot (\funcPseudoCoreInternal + 1)}{c_{i,k}} \,,
    \end{gather*}
    since  $v$ is removed in Algorithm~\ref{alg:pseudoCore} and at the moment of removal $\marginal{v}{\bar{S}} \geq \funcPseudoCoreInternal$. Rearranging and putting this together yields
    \[
        c_v \geq \frac{\funcPseudoCoreInternal \cdot (\funcPseudoCoreInternal + 1)}{\funcPseudoCore} - \frac{\varepsilon \cdot \funcPseudoCore}{e \cdot n^2}
        = \frac{n}{\varepsilon^2 \cdot n} - \frac{\varepsilon^2}{6 e n} \left(\frac{\varepsilon}{3}\maxMarginal - 1\right)
        \geq \frac{(1-\varepsilon)}{\varepsilon^2}\,.
    \]

Hence, for every agent $v\in\widetilde S$ that was removed in Algorithm~\ref{alg:pseudoCore} before in total more than $\varepsilon n/6$ agents were removed, we have
\[c_v\geq (1-\varepsilon)/(\varepsilon^2)\,.\]
By Observation~\ref{obs:averageCost}, we have $\sum_{i\in \widetilde S}c_i<n$. This shows that Algorithm~\ref{alg:pseudoCore} removes at most $\varepsilon^2 \cdot n/(1-\varepsilon)$ vertices from $\widetilde S$. Thus, as desired $|\widetilde S\setminus S_M|\leq n/(1-\varepsilon) \leq \varepsilon n/6$. 

To finish the proof notice that
\[ R(S^*)-\varepsilon/3\leq  R(S')\leq R(S')+\frac{\varepsilon n}{6}\cdot n\cdot\frac{1}{\binom{n}{2}}=R(S')\varepsilon/3\,. \]

\end{proof}

\end{document}